\theoremstyle{plain}
\theoremstyle{plain}
\theoremstyle{plain}
\newtheorem{lem}{\protect\lemmaname}
\theoremstyle{plain}
\newtheorem{thm}{\protect\theoremname}
\theoremstyle{plain}
\theoremstyle{definition}
\theoremstyle{definition}
\theoremstyle{definition}
\newtheorem{rem}{\protect\remarkname}
\providecommand{\claimname}{Claim}
\providecommand{\lemmaname}{Lemma}
\providecommand{\propositionname}{Proposition}
\providecommand{\theoremname}{Theorem}
\providecommand{\corollaryname}{Corollary} 
\providecommand{\definitionname}{Definition}
\providecommand{\assumptionname}{Assumption}
\providecommand{\remarkname}{Remark}
\newcommand{\overbar}[1]{\mkern 1.25mu\overline{\mkern-1.25mu#1\mkern-0.25mu}\mkern 0.25mu}
\newcommand{\ttil}{\tilde{t}}
\newcommand{\Ttil}{\widetilde{T}}
\newcommand{\Shat}{\widehat{S}}
\newcommand{\kbar}{\overbar{k}}
\newcommand{\Bernoulli}{\mathrm{Bernoulli}}
\newcommand{\pe}{P_{\mathrm{e}}}
\newcommand{\Yv}{\mathbf{Y}}
\newcommand{\Ec}{\mathcal{E}}
\newcommand{\EE}{\mathbb{E}}
\newcommand{\PP}{\mathbb{P}}
\newcommand{\Lc}{\mathcal{L}}
\newcommand{\manuallabel}[2]{\def\@currentlabel{#2}\label{#1}}
\newcommand{\poly}{{\rm poly}}
\begin{document} 

\title{Noisy Adaptive Group Testing \\ via Noisy Binary Search}
\author{Bernard Teo and Jonathan Scarlett}
\date{}
\maketitle

\begin{abstract}
    The group testing problem consists of determining a small set of defective items from a larger set of items based on a number of possibly-noisy tests, and has numerous practical applications.  One of the defining features of group testing is whether the tests are adaptive (i.e., a given test can be chosen based on all previous outcomes) or non-adaptive (i.e., all tests must be chosen in advance).  In this paper, building on the success of binary splitting techniques in noiseless group testing (Hwang, 1972), we introduce noisy group testing algorithms that apply noisy binary search as a subroutine.  We provide three variations of this approach with increasing complexity, culminating in an algorithm that succeeds using a number of tests that matches the best known previously (Scarlett, 2019), while overcoming fundamental practical limitations of the existing approach, and more precisely capturing the dependence of the number of tests on the error probability.  We provide numerical experiments demonstrating that adaptive group testing strategies based on noisy binary search can be highly effective in practice, using significantly fewer tests compared to state-of-the-art non-adaptive strategies.
\end{abstract}
\begin{IEEEkeywords}
    Group testing, sparsity, adaptive algorithms, binary splitting, noisy binary search
\end{IEEEkeywords}

\long\def\symbolfootnote[#1]#2{\begingroup\def\thefootnote{\fnsymbol{footnote}}\footnote[#1]{#2}\endgroup}

\symbolfootnote[0]{ The authors are with the  Department of Computer Science and the Department of Mathematics, National University of Singapore  (e-mail: \url{bernardteo@u.nus.edu}, \url{scarlett@comp.nus.edu.sg}). J.~Scarlett is also with the Institute of Data Science, National University of Singapore.
    
This work was supported by the Singapore National Research Foundation (NRF) under grant number R-252-000-A74-281.}

%
%
\section{Introduction}

The group testing problem consists of determining a small subset $S$ of defective items within a larger set of items $\{1,\dotsc,n\}$, based on a number of possibly-noisy tests. This problem has a history in medical testing \cite{Dor43}, and has regained significant attention following new applications in areas such as communication protocols \cite{Ant11}, pattern matching \cite{Cli10}, database systems \cite{Cor05}, and COVID-19 testing \cite{Ald21}, as well as connections with compressive sensing \cite{Gil08}. In the noiseless setting, each test takes the form
\begin{equation}
    Y = \bigvee_{j \in S} X_j, \label{eq:gt_noiseless_model}
\end{equation}
where the test vector $X = (X_1,\dotsc,X_n) \in \{0,1\}^n$ indicates which items are included in the test, and $Y$ is the resulting observation.  That is, the output indicates whether at least one defective item was included in the test.   One wishes to design a sequence of tests $X^{(1)},\dotsc,X^{(t)}$, with $t$ ideally as small as possible, such that the outcomes can be used to reliably recover the defective set $S$ with probability close to one.

One of the defining features of the group testing problem is the distinction between the {\em non-adaptive} and {\em adaptive} settings.  In the non-adaptive setting, every test must be designed prior to observing any outcomes, whereas in the adaptive setting, a given test $X^{(i)}$ can be designed based on the previous outcomes $Y^{(1)},\dotsc,Y^{(i-1)}$.  It is of considerable interest to determine the extent to which this additional freedom helps in reducing the number of tests.

In the noiseless setting, the problem of finding a near-optimal adaptive algorithm was solved long ago by Hwang \cite{Hwa72}, who proposed an algorithm based on binary splitting (see Section \ref{sec:related}) and showed that it is guaranteed to succeed with $t = k\log_2\frac{n}{k} + O(k)$ tests.  Due to a well-known counting argument, this is asymptotically optimal whenever $k = o(n)$.  In addition, Hwang's algorithm has the advantage of degrading gracefully (i.e., still successfully identifying many defectives) when $t$ falls below the given threshold \cite{Tru20}.

Although perhaps not as ubiquitous as the noiseless version, noisy binary search algorithms have also attracted significant research attention outside the context of group testing \cite{Fei94,Kar07,Ben08,Now09}.  Given the importance of Hwang's noiseless group testing algorithm that uses binary splitting as a subroutine, it is therefore natural to ask the following: {\em Do there exist near-optimal noisy adaptive group testing strategies based on noisy binary search?}  In this paper, we partially answer this question in the affirmative, with near-optimality holding at least in certain scaling regimes, and the number of tests matching the best previously known practical algorithm more generally.  We will see that depending on the desired recovery guarantees and number of tests, somewhat more care is needed in the testing strategy compared to the noiseless setting.

\subsection{Problem Setup} \label{sec:setup}

We let the defective set $S$ be a fixed but unknown subset of $\{1,\dotsc,n\}$ of cardinality $k$.  Throughout the paper, we adopt the widely-used assumption $k = o(n)$ as $n \to \infty$, which is convenient for absorbing certain ``nuisance'' terms into the lower-order asymptotic terms.  More care would be needed with these terms when handling the regime $k = \Theta(n)$ (see \cite{Ald19a} for the analog in the noiseless setting), and this is left for future work.

An adaptive algorithm iteratively designs a sequence of tests $X^{(1)},\dotsc,X^{(t)}$, with $X^{(i)} \in \{0,1\}^n$, and the corresponding outcomes are denoted by $\Yv = (Y^{(1)},\dotsc,Y^{(t)})$, with $Y^{(i)} \in \{0,1\}$.  A given test is allowed to depend on all of the previous outcomes.  Generalizing \eqref{eq:gt_noiseless_model}, we consider the following widely-adopted symmetric noise model:
\begin{equation}
    Y = \bigg(\bigvee_{j \in S} X_j\bigg) \oplus Z, \label{eq:gt_symm_model}
\end{equation}
where $Z \sim \Bernoulli(\rho)$ for some $\rho \in \big(0,\frac{1}{2}\big)$, and $\oplus$ denotes modulo-2 addition.  While symmetric noise is not always realistic, it is commonly considered in the literature on noisy group testing (e.g., \cite{Cha11,Sca15b}).  The main reason for this restriction is that we use the noisy binary search algorithm of \cite{Ben08} as a subroutine, and that work focuses on symmetric noise.  However, we use their algorithm in a ``black-box'' manner, meaning that if any noisy binary search algorithm is devised for other noise models (e.g., Z-channel models \cite{Sca18b}), our algorithms and analysis can be adapted to utilize them accordingly.

Given the tests and their outcomes, a \emph{decoder} forms an estimate $\Shat$ of $S$.  We consider the exact recovery criterion, in which the error probability is given by 
\begin{equation}
    \pe := \PP[\Shat \ne S], \label{eq:pe}
\end{equation}
where the probability is taken over the randomness of the tests $X^{(1)},\dotsc,X^{(t)}$ (if randomized), and the noisy outcomes $Y^{(1)},\dotsc,Y^{(t)}$.  

In the adaptive setting, the algorithm may choose when to stop, and accordingly, the number of tests used may be random.  In such cases, we denote the random number of tests by $T$, and we are interested in characterizing its average, $\EE[T]$.

\subsection{Related Work} \label{sec:related}

While there exist extensive works on group testing (e.g., see \cite{Du93,Ald19} for surveys), we focus our attention here on adaptive algorithms, since this is the focus of our work.

{\bf Noiseless adaptive group testing.} As mentioned above, one of the most well-known strategies for noiseless adaptive group testing is Hwang's generalized binary splitting algorithm \cite{Hwa72}, which works as follows:
\begin{enumerate}
    \item Arbitrarily partition the $n$ items into $k$ groups of size $\frac{n}{k}$.
    \item For each of the $k$ partitions:
    \begin{itemize}
        \item[(a)] Test the entire partition, and if the outcome is negative, then no further steps are performed for this partition.
        \item[(b)] If the outcome is positive, then test the left half of the partition, use the outcome to identify a defective sub-partition (left half or right half), and recursively continue until a single item remains.
        \item[(c)] Declare the item just found as defective, remove it from the partition, and return to Step (a).
    \end{itemize}
\end{enumerate}
This algorithm succeeds using $\big(k\log_2\frac{n}{k}\big)(1+o(1))$ tests whenever $k = o(n)$.  Subsequent works attained the same threshold using distinct four-stage \cite{Dam12} and two-stage \cite{Sca18,Coj19a} adaptive designs, though the two-stage designs only attain $\pe \to 0$ instead of $\pe = 0$.

{\bf Noisy adaptive group testing.} The most related existing theoretical results on noisy adaptive group testing are outlined as follows:
\begin{itemize}
    \item The {\em capacity bound} states that any adaptive algorithm attaining $\pe \to 0$ must have an average number of tests lower bounded as follows when $k = o(n)$ \cite{Bal13}:\footnote{Here and subsequently, the function $\log(\cdot)$ has base $e$, and all information measures are in units of nats.}
    \begin{equation}
        \EE[T] \ge \bigg( \frac{k \log \frac{n}{k}}{ I(\rho) } \bigg) (1+o(1)), \label{eq:t_capacity}
    \end{equation}
    where $I(\rho) = \log 2 - H_2(\rho)$ (with $H_2(\rho) = \rho\log\frac{1}{\rho} + (1-\rho)\log\frac{1}{1-\rho}$ denoting the binary entropy function) is the capacity of a binary symmetric channel with parameter $\rho$.  Moreover, this bound can be strengthened in the regime $k = \Theta(n)$ by replacing $k \log \frac{n}{k}$ by the larger quantity $\log {n \choose k}$.  The capacity bound is usually stated for the case that the number of tests is fixed (i.e., non-random), but the case of a variable number of tests follows similarly from the fact that $I(\rho)$ is the capacity of the binary symmetric channel even when variable-length coding is allowed \cite{Ver10}.
    \item The best known upper bound for a computationally efficient algorithm is given in \cite{Sca19}, and states that there exists a four-stage algorithm that succeeds with probability approaching one using a number of tests satisfying 
    \begin{equation}
        t \le \bigg( \frac{k \log \frac{n}{k}}{ I(\rho) } + \frac{k \log k}{D_2(\rho \| 1-\rho)}\bigg) (1+o(1)), \label{eq:t_existing}
    \end{equation}
    where $D_2(a\|b) = a \log \frac{a}{b} + (1-a)\log\frac{1-a}{1-b}$ is the binary relative entropy function.  
    A refinement of this result is also given in \cite[Thm.~3]{Sca18}, but it based on a computationally expensive brute force search over $n \choose k$ subsets in the first stage, and the difference between the resulting bound and \eqref{eq:t_existing} is almost imperceptible when plotted (see Figure \ref{fig:rho11}).  Since the refined bound is cumbersome to state, we omit the details here.
    \item An additional converse bound is given in \cite{Sca18}, stating that for any algorithm attaining $\pe \to 0$ both in the cases of $k$ defectives and $k-1$ defectives, the number of tests must satisfy\footnote{While \eqref{eq:t_conv} is stated and proved for fixed $t$ in \cite{Sca18}, it also holds in the more general variable-length setting as stated in \eqref{eq:t_conv}.  This is because the first step of the proof in \cite{Sca18} is to write $\sum_{j \in S} \EE[T_j] \le t$, where $T_j$ is the number of tests containing $j$ and no other defective; in the variable-length setting, the same holds with $\EE[T]$ in place of $t$, and the remainder of the proof is unchanged.}
    \begin{equation}
        \EE[T] \ge \bigg( \frac{k \log k}{ \log\frac{1-\rho}{\rho} } \bigg) (1-o(1)). \label{eq:t_conv}
    \end{equation}
    This exceeds the capacity bound \eqref{eq:t_capacity} in sufficiently dense scaling regimes (namely, $k = \Omega (n^{\theta})$ with $\theta$ sufficiently close to one).  In Figure \ref{fig:rho11}, this bound is represented by the diagonal dashed line, whereas the capacity bound is the horizontal dashed line.
\end{itemize}
In the commonly-considered regime $k = \Theta(n^{\theta})$ with $\theta \in (0,1)$, the upper bound \eqref{eq:t_existing} coincides with \eqref{eq:t_capacity} in the limit $\theta \to 0$, and coincides with \eqref{eq:t_conv} when $\rho \to 0$ and $\theta \to 1$ simultaneously \cite{Sca18}, though a gap still remains for fixed $\rho \in (0,1)$ and $\theta \in (0,1)$. 

\begin{figure*}
    \begin{centering}
        \includegraphics[width=0.45\columnwidth]{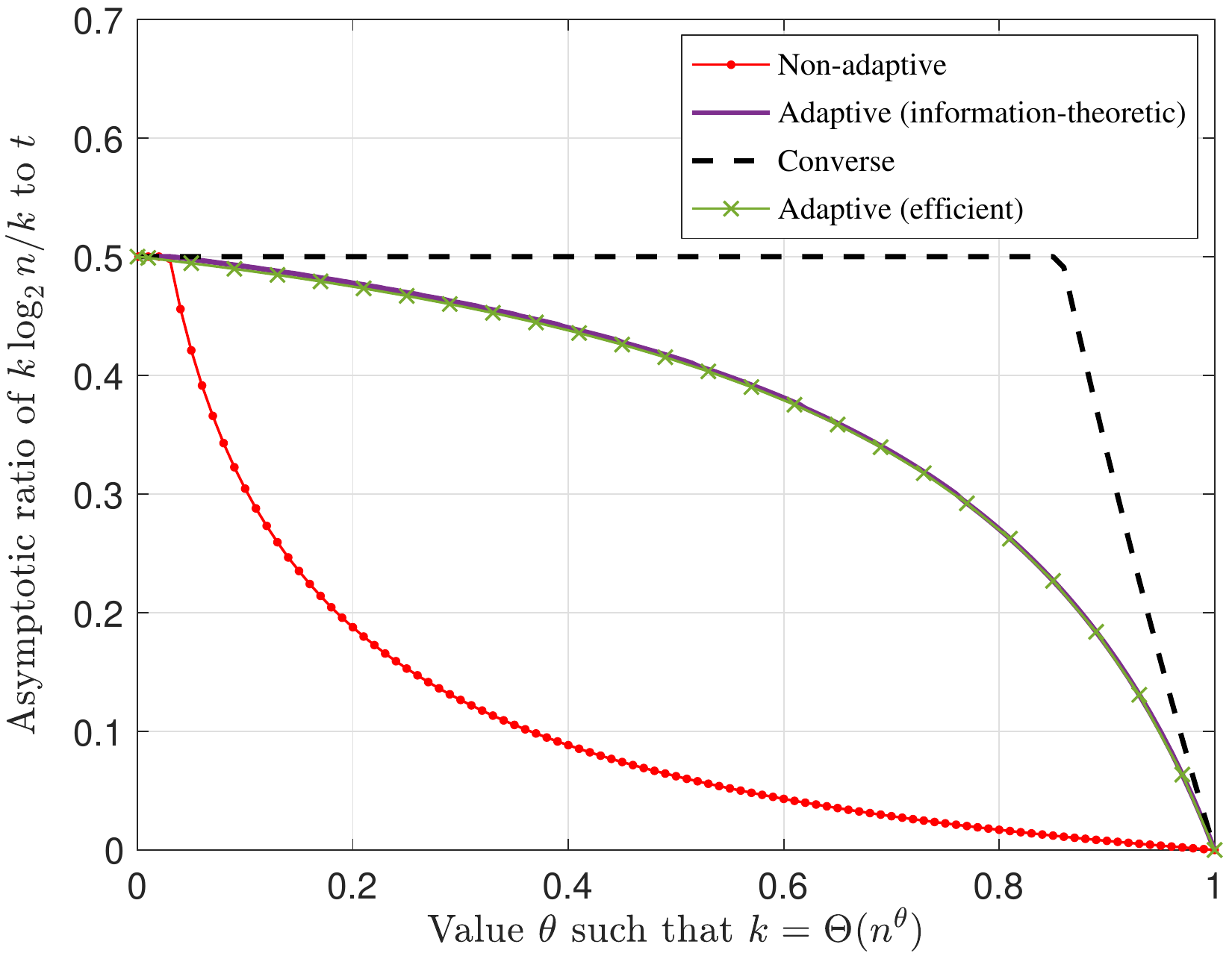}
        \par
    \end{centering}
    
    \caption{Asymptotic performance bounds for noisy group testing with noise level $\rho = 0.11$.  This figure is replicated from \cite{Sca19}, with the non-adaptive achievability result coming from \cite{Sca15b}. \label{fig:rho11}}
\end{figure*}

Beyond the above works, a noisy adaptive version of the GROTESQUE algorithm \cite{Cai13} attains an order-optimal number of tests {\em and decoding time}, but the underlying constants are left unspecified.  On the other hand, there are notable {\em algorithmic} similarities between GROTESQUE and our approach, which we discuss in detail in Section \ref{sec:comparison}.
Recent practical noisy adaptive group testing algorithms include \cite{Cut20,Abr20}, but these do not currently come with theoretical guarantees on the number of tests, which is the main focus of our work.

{\bf Noisy binary search.} A relatively early study of noisy binary search was given by Karp and Kleinberg \cite{Kar07}.  While the constant factors in their theoretical guarantees were not optimized and are too large for our purposes, we will see in Section \ref{sec:comparison} that the algorithm itself (which is based on the method of multiplicative weight updates) is very effective in practice.

Feige \cite{Fei94} devised a noisy binary search algorithm based on traversing a balanced binary tree.  The root node contains all items, each child node contains half of the parent subset, and the leaf nodes are singletons.  The idea is to use the noisy test outcomes to traverse the tree until a defective leaf is found with high probability.  It was shown that for fixed $\rho \in \big(0,\frac{1}{2}\big)$, this algorithm requires $O\left(\log n\right) + O\left(\log \frac{1}{\delta}\right)$ queries, where $n$ is the number of elements in the array being searched, and $\delta$ is the target error probability.

Ben-Or \cite{Ben08} builds on the work of \cite{Fei94} to devise a noisy binary search algorithm with a more precise constant in the $O(\log n)$ term.  For fixed $\rho \in \big(0,1-\frac{1}{\sqrt 2}\big)$, their algorithm solves the noisy binary search problem in $\frac{\log n}{I(\rho)} + O(\log \log n) + O\left(\log \frac{1}{\delta}\right)$ queries.  We use this algorithm as a building block for our group testing algorithms.

{\bf Noisy binary search for group testing.} To our knowledge, the only previous work applying noisy binary search ideas to adaptive group testing is \cite{Bit18}, in the context of private anomaly detection.  However, the analysis therein focuses on the simpler setting of $k=1$, allowing the direct application of feedback communication strategies \cite{Hor63,Bur74}.  In our understanding, this is no longer directly possible in the case that $k > 1$.

In a recent work \cite{Pri21}, a {\em non-adaptive} noisy group testing algorithm based on binary splitting was proposed, building on other recent works handling the noiseless setting \cite{Pri20,Che20}.  As a result of being non-adaptive, the algorithm differs significantly from our adaptive approach.  In addition, the analysis in \cite{Pri21} only establishes scaling laws on the number of tests and decoding time with unspecified constants, whereas in this paper we are interested in the precise constant factors.

\subsection{Contributions}

In this paper, we present three algorithms for noisy adaptive group testing via noisy binary search in increasing order of complexity:
\begin{itemize}
    \item The first variant (Section \ref{sec:appr1}) applies noisy binary search (NBS) in a direct manner, and uses a number of tests whose first term matches the first term in \eqref{eq:t_existing}, but whose second term has an unspecified constant coefficient to $O(k \log k)$.
    \item The second variant (Section \ref{sec:appr2}) combines low-confidence NBS with high-confidence repetition testing, and uses a number of tests nearly matching \eqref{eq:t_existing} (when $\pe \to 0$ sufficiently slowly) but with a higher constant in the second term.
    \item The third variant (Section \ref{sec:appr3}) allows the repetition testing to make some errors and corrects for these in a later step, and uses a number of tests matching \eqref{eq:t_existing} when $\pe \to 0$ sufficiently slowly, while also giving a more general bound specifying the precise dependence on $\pe$.
\end{itemize}
The advantages of our approach over the one in \cite{Sca19} (attaining \eqref{eq:t_existing}) are discussed in more detail in Section \ref{sec:comparison}.  In addition, we provide a numerical example indicating that our general approach can indeed be effective in practice.

From the perspective of our theoretical results, the first two variants mentioned above are primarily introduced as stepping stones towards the third.  On the other hand, the first and simplest variant is seen to perform well experimentally, suggesting that the gaps in the theoretical results may be primarily due the analysis techniques used, rather than inherent strengths and weaknesses in the approaches themselves.  See Section \ref{sec:comparison} for further discussion.

%
%
\section{Preliminaries}

In this section, we introduce a number of useful theoretical and algorithmic tools that will be used throughout the paper.

\subsection{Noisy Binary Search}  \label{sec:nbs}

We first momentarily depart from group testing and discuss the noisy binary search (NBS) algorithm that we will make use of \cite{Ben08}.  The variant of NBS that we consider is as follows (see \cite{Kar07} for more general formulations):  There exists an unknown index $i^* \in \{1,\dotsc,n\}$, and the goal is to locate it via adaptive queries of the form ``Is $i^* \le i$?''.  Each query independently returns the correct answer with probability $1-\rho$, and the incorrect answer with probability $\rho$.

We will make use of the following main result from \cite{Ben08}.

\begin{lem} \label{lem:nbs}
    {\em (NBS Guarantee \cite{Ben08})}
    There exists an NBS algorithm that, given any $\delta \in (0,1)$, succeeds with probability at least $1-\delta$ while satisfying
    \begin{equation}
        \EE[\mathrm{\# queries}] \le \frac{\log n}{ I(\rho) } + O\Big( \log\frac{1}{\delta}\Big) + O(\log \log n).
    \end{equation}
\end{lem}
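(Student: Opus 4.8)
The plan is to establish the guarantee via an information-theoretic potential argument applied to a Bayesian (posterior-tracking) search procedure, treating each query as a single use of a binary symmetric channel (BSC) with crossover probability $\rho$. The algorithm maintains a weight vector $w = (w_1,\dots,w_n)$ over the candidate positions, initialized to the uniform distribution and normalized to a posterior $P_t$ after each query. At step $t$ I would query the threshold $i$ that most evenly splits the posterior mass between $\{i^* \le i\}$ and $\{i^* > i\}$ (the weighted median), observe the noisy answer $Y$, and perform the exact Bayesian update: multiply the weights on the side consistent with $Y$ by $(1-\rho)$ and those on the inconsistent side by $\rho$, then renormalize. Because the update is exactly Bayesian, the posterior mass $P_t(i^*)$ on the true index forms a well-behaved (submartingale-type) process, which is what ultimately drives both the query count and the confidence.

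First I would quantify the per-query information gain. Let $B \in \{0,1\}$ denote which side of the current query contains $i^*$; under the posterior $B$ has some distribution, and the observation is $Y = B \oplus Z$ with $Z \sim \Bernoulli(\rho)$. The expected one-step decrease in the Shannon entropy of the posterior equals the mutual information $I(B;Y)$, which is maximized at $I(\rho) = \log 2 - H_2(\rho)$ precisely when the split is balanced (i.e.\ $B$ is uniform) and is strictly smaller otherwise. Thus, so long as the weighted median splits the mass close to evenly, each query reduces the posterior entropy by essentially $I(\rho)$ nats in expectation. Starting from entropy $\log n$ under the uniform prior and continuing until the entropy is driven below a threshold of order $\log\frac{1}{\delta}$, this bulk phase accounts for the leading term $\frac{\log n}{I(\rho)}$ together with an $O\big(\frac{1}{I(\rho)}\log\frac{1}{\delta}\big) = O\big(\log\frac{1}{\delta}\big)$ contribution from pushing the residual tail mass below $\delta$.

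The step I expect to be the main obstacle is controlling the loss incurred when the posterior is too ``lumpy'' to admit a balanced split: if a single weight already exceeds $\frac{1}{2}$, no threshold can balance the two sides, the per-query information gain falls below $I(\rho)$, and the clean entropy-decrease recursion breaks. Handling this requires a careful multi-phase treatment — for instance, coarsening the weights onto a grid and arguing that the granularity-induced shortfall is summable, or periodically restarting/cleaning up the search so that rare unbalanced configurations contribute only lower-order overhead. This is exactly where the $O(\log\log n)$ term enters, and it is the delicate part of the analysis; the remainder is comparatively routine once the per-step bound holds uniformly.

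Finally, I would convert the expected-entropy statement into the stated high-probability guarantee. Since the Bayesian update endows the log-posterior-weight of $i^*$ with a controllable positive drift (via the likelihood-ratio properties of the BSC), an optional-stopping or Azuma/Freedman-type concentration argument shows that the posterior concentrates at least $1-\delta$ of its mass on $i^*$ within the claimed expected number of queries, whereupon the algorithm halts and declares that index. Combining the bulk entropy reduction, the confidence-boosting tail, and the granularity correction yields the bound $\EE[\#\text{queries}] \le \frac{\log n}{I(\rho)} + O\big(\log\frac{1}{\delta}\big) + O(\log\log n)$.
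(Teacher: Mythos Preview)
The paper does not prove this lemma; it is quoted directly from \cite{Ben08} as a black-box result, and the text immediately following it states that ``the inner workings of the algorithm in \cite{Ben08} are not directly relevant for our main goal.'' So there is no in-paper proof to compare against.

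That said, your sketch follows a different algorithmic route from the one actually used in \cite{Ben08}. As the paper itself notes in Section~\ref{sec:related}, Ben-Or's algorithm builds on Feige's tree-traversal method \cite{Fei94} (walking a balanced binary tree whose leaves are singletons, with backtracking), whereas your Bayesian posterior-tracking / weighted-median scheme is essentially the Horstein scheme, much closer in spirit to the Karp--Kleinberg multiplicative-weights algorithm \cite{Kar07}. The paper explicitly remarks that the latter has ``much looser theoretical guarantees'' despite excellent practical performance --- a hint that extracting the exact leading constant $1/I(\rho)$ with only an additive $O(\log\log n)$ correction from the posterior approach is not routine.

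The genuine gap in your proposal is the handling of unbalanced splits. You correctly identify that once a single posterior weight exceeds $\tfrac{1}{2}$ the per-step entropy reduction falls strictly below $I(\rho)$, but the fixes you gesture at (``coarsening onto a grid,'' ``periodically restarting/cleaning up'') are not analyzed, and it is precisely here that the $O(\log\log n)$ term must be earned rather than asserted. Without a concrete mechanism and a bound showing the cumulative shortfall from imperfect bisection is $O(\log\log n)$, the argument collapses to the standard entropy heuristic, which gives the right leading order but not the claimed additive form. Ben-Or resolves this via a quite different phase-structured tree argument (and, as the paper notes, under the restriction $\rho < 1-1/\sqrt{2}$); reproducing the stated bound via the Bayesian route would require substantially more than what you have sketched.
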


Note that we re-use the symbol $n$ here as standard notation, but we will typically apply this result to the group testing problem with a smaller quantity, such as $\frac{n}{k}$ in place of $n$ (i.e., considering subsets of the entire set of items).

 The inner workings of the algorithm in \cite{Ben08} are not directly relevant for our main goal of deriving theoretical bounds on the number of tests.  However, in practice, the choice of NBS algorithm is naturally very important.  We are not aware of any works performing a practical implementation of the algorithm in \cite{Ben08}, and doing so may be difficult due to the existence of several ``nuisance'' parameters therein.  In our own experiments (Section \ref{sec:comparison}), we instead use an algorithm from \cite{Kar07} with much looser theoretical guarantees, but excellent practical performance.

\subsection{Modified Noisy Binary Search} \label{sec:modified}

To make NBS more directly useful for group testing, we modify it to consider an array of items in which {\em any} number of items (possibly zero or greater than one) may be defective.  Indexing these items as $\{1,\dotsc,n\}$ (again keeping in mind that later we will substitute $\frac{n}{k}$ or similar), we modify the goal as follows:
\begin{itemize}
    \item If there are one or more defective items in the array, then the algorithm should return the index $i^*$ of the left-most one (e.g., if items $2$, $6$, and $12$ are defective, then $i^* = 2$).
    \item If there are no defective items in the array, then the algorithm should return $\phi$, a symbol indicating that it believes all items to be non-defective.\footnote{In fact, we can prove the same theoretical guarantees for our adaptive group testing algorithms when we omit this modification and assume that the subroutine is only ever run with is at least one defective (treating any other cases as errors).  We still include this modification, since it essentially comes ``for free'' and may be of independent interest.}
\end{itemize}
In addition, the queries are now changed to regular group testing queries, and the overall problem is termed {\em modified noisy binary search} (MNBS).   Defining the notion of error/success probability for an MNBS procedure according to the above criteria, we can obtain the following as a simple consequence of Lemma \ref{lem:nbs}.

\begin{lem} \label{lem:mnbs}
    {\em (MNBS Guarantee)}
    There exists an MNBS algorithm using noisy group tests that, given any $\delta \in (0,1)$, succeeds with probability at least $1-\delta$ while satisfying
    \begin{equation}
        \EE[\mathrm{\# tests}] \le \frac{\log n}{ I(\rho) } + O\Big( \log\frac{1}{\delta}\Big) + O(\log \log n).
    \end{equation}
\end{lem}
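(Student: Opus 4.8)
The plan is to reduce MNBS to the standard NBS problem of Lemma \ref{lem:nbs} by observing that each NBS threshold query can be implemented using a single noisy group test. Specifically, the NBS query ``Is $i^* \le i$?'' is equivalent to asking whether the set $\{1,\dotsc,i\}$ contains a defective: if $i^*$ denotes the left-most defective index, then $i^* \le i$ holds precisely when some defective lies in positions $1$ through $i$. I would therefore implement this query by performing the group test $X$ with $X_j = 1$ for $j \le i$ and $X_j = 0$ otherwise. Under the symmetric noise model \eqref{eq:gt_symm_model}, the noiseless outcome $\bigvee_{j \in S} X_j$ equals the indicator of the event $\{i^* \le i\}$, and it is flipped independently with probability $\rho$, so this single test returns the correct answer to the NBS query with probability $1-\rho$, exactly matching the NBS query model. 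Repeated queries to the same threshold correspond to repeated tests of the same set, which carry independent noise, consistent with the independence assumed by NBS.

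Given this correspondence, when the array contains at least one defective, I would run the NBS algorithm of Lemma \ref{lem:nbs} directly on the $n$ positions, answering each of its queries with one group test as above. The algorithm then locates the true $i^*$ (the left-most defective) with probability at least $1-\delta$, and since each query consumes exactly one test, the expected number of tests equals the expected number of queries, which is bounded by $\frac{\log n}{I(\rho)} + O\big(\log\frac{1}{\delta}\big) + O(\log\log n)$.

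To additionally handle the case of zero defectives (returning $\phi$), I would augment the search space with a virtual sentinel position $n+1$ that is declared defective by fiat, and run NBS over the $n+1$ positions $\{1,\dotsc,n+1\}$. With this convention a left-most defective $i^*$ always exists: $i^* \in \{1,\dotsc,n\}$ precisely when the real array contains a defective, and $i^* = n+1$ precisely when it does not. Threshold queries ``Is $i^* \le i$?'' for $i \le n$ are answered by genuine group tests exactly as before, while the query at $i = n+1$ is trivially ``yes'' and can be answered deterministically at no cost and with no error. Outputting $\phi$ when NBS returns $n+1$ (and the returned index otherwise) then solves MNBS in both cases with probability at least $1-\delta$. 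Because $\log(n+1) = \log n + O(1/n)$ and $\log\log(n+1) = \log\log n + o(1)$, the test-count bound is unchanged up to the stated order terms.

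The main point requiring care is the boundary behaviour in the zero-defective case: one must verify that the sentinel reduction preserves the NBS noise model on all non-trivial queries (it does, since every query with $i \le n$ is a genuine group test with flip probability $\rho$) and that the single trivial query at $i = n+1$ neither introduces error nor consumes a test. Beyond this, the argument is a direct black-box application of Lemma \ref{lem:nbs}, with the bound inherited verbatim.
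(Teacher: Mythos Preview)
Your proposal is correct and follows essentially the same approach as the paper: reduce MNBS to NBS by answering each threshold query ``Is $i^* \le i$?'' with the group test on $\{1,\dotsc,i\}$ (so that $i^*$ is the left-most defective), and handle the empty case by appending a dummy defective at position $n+1$ and returning $\phi$ when NBS outputs $n+1$. The only minor difference is that the paper simulates $\Bernoulli(1-\rho)$ outcomes whenever the dummy item is included (to preserve the exact noise model assumed by the black-box NBS), whereas you answer that boundary query deterministically; both are fine, though the paper's choice is the more conservative way to invoke Lemma~\ref{lem:nbs} as a black box.
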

\begin{proof}
    The idea is to reduce the problem to regular NBS by (i) always performing group tests with pools of the form $\{1,\dotsc,i\}$, and (ii) adding an $(n+1)$-th ``dummy'' defective to detect when to declare all items as non-defective by returning $\phi$.  The details are given in Appendix \ref{app:mnbs}.
\end{proof}

\subsection{Estimating the Number of Defectives} \label{sec:est_k}

It will be useful to have a procedure for estimating the number of defectives $k$ among items $\{1,\dotsc,n\}$ (and similarly for subsets of items), so that we do not need to assume {\em a priori} knowledge regarding $k$.  While numerous works have given algorithms for estimating $k$ in the noiseless setting \cite{Dam10,Fal16,Bsh18}, analogous results for the noisy setting appear to be very limited.  The following result is based on a simple approach that can likely be improved, but is sufficient for our purposes.

\begin{lem} \label{lem:est_k}
    {\em (Estimation of $k$)}
    Fix any constant $c > 0$.  There exists an adaptive algorithm that, given $n$ items among which $k$ are defective, outputs $\kbar$ satisfying $\frac{\kbar}{2} \le k \le \kbar$ with probability $1 - O(n^{-c})$, using an average of $O(\log k \cdot \log n)$ tests.  Moreover, this can be improved to $(1-\epsilon)\kbar \le k \le \kbar$ for any fixed $\epsilon \in (0,1)$, provided that the implicit constant in the number of tests is suitably modified as a function of $\epsilon$.
\end{lem}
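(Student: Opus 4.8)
The plan is to estimate $k$ by a \emph{geometric (doubling) search} over a candidate scale $\ktilde$, where at each scale we run a repeated \emph{density test} that decides whether $k$ is ``large'' or ``small'' relative to $\ktilde$. The building block is the following comparison subroutine for a fixed $\ktilde$: include each of the $n$ items in a pool independently with probability $p = \alpha/\ktilde$ for a suitable constant $\alpha \in (0,1)$ (capped at $1$), and perform one noisy group test. In the noiseless model the pool is positive with probability $q = 1 - (1-p)^k$, so using $1 - (1-p)^k \le pk$ and $(1-p)^k \le e^{-pk}$ we obtain
\begin{equation}
    k \ge \ktilde \ \Rightarrow\ q \ge 1 - e^{-\alpha}, \qquad k \le \ktilde/\sqrt{2} \ \Rightarrow\ q \le \alpha/\sqrt{2}.
\end{equation}
For $\alpha$ small enough this leaves a strictly positive gap between the two regimes, and since the \emph{noisy} positive probability equals $\rho + (1-2\rho)q$ (monotone in $q$, with $\rho < \frac{1}{2}$ fixed), the gap persists up to the constant factor $1-2\rho$.

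Given this gap, I would repeat the density test $m = O\big(\log\frac{1}{\delta'}\big)$ times at the same scale, form the empirical positive frequency, and threshold it at the midpoint of the two regime values; Hoeffding's inequality then makes the comparison declare \textbf{HIGH} (resp.\ \textbf{LOW}) correctly with probability at least $1-\delta'$ whenever $k \ge \ktilde$ (resp.\ $k \le \ktilde/\sqrt{2}$), with no guarantee required in the intermediate ``don't-care'' band.

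Next, run the comparison at scales $\ktilde_j = (\sqrt{2})^{\,j}$ for $j = 0,1,2,\dotsc$, stop at the first index $j^*$ that returns LOW, and output $\kbar := \lfloor \ktilde_{j^*}\rfloor$. Conditioned on the event that every \emph{reliable-regime} decision is correct, a LOW at $j^*$ rules out $k \ge \ktilde_{j^*}$, so $k < \ktilde_{j^*}$ and hence $k \le \kbar$ for integer $k$; meanwhile minimality forces HIGH at $j^*-1$, which rules out $k \le \ktilde_{j^*-1}/\sqrt{2} = \ktilde_{j^*}/2$, giving $k > \ktilde_{j^*}/2 \ge \kbar/2$. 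Thus $\kbar/2 \le k \le \kbar$, as required; note that only reliable-regime guarantees are invoked, so don't-care decisions cannot break the argument. Because reliable LOW is guaranteed once $\ktilde_j \ge \sqrt{2}\,k$, the search stops after $j^* = O(\log k)$ scales on this event. To control complexity and failure simultaneously, set $\delta' = \Theta(n^{-c}/\log n)$, so each comparison costs $m = O(\log n)$ tests; a union bound over the $O(\log k) = O(\log n)$ scales reached gives overall success $1 - O(n^{-c})$, and the test count on the good event is $O(\log k)\cdot O(\log n) = O(\log k \log n)$. Imposing a hard cap at $\ktilde = n$ (valid since $k \le n$) bounds the worst case by $O(\log^2 n)$, so $\EE[T] \le O(\log k \log n) + O(n^{-c})\cdot O(\log^2 n) = O(\log k \log n)$. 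For the refined $(1-\epsilon)$ guarantee I would replace the grid ratio and the regime separation $\sqrt{2}$ by $(1-\epsilon)^{-1/2}$, shrinking the window factor from $2$ to $(1-\epsilon)^{-1}$; this increases the number of scales by a factor $O(1/\epsilon)$ and shrinks the positive-probability gap, so $\alpha$ and $m$ must be enlarged as functions of $\epsilon$, matching the statement's caveat on the implicit constant.

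The main obstacle is pinning down the multiplicative window \emph{cleanly} from noisy, adaptively terminated tests: the regime-gap bounds, the Hoeffding threshold, and the definition of $j^*$ must interlock so that only reliable-regime decisions (never the don't-care ones) enter the window argument, and the union bound must range over exactly the scales the procedure can reach while the expected-test bound correctly absorbs the rare failure branch via the $\ktilde = n$ cap. The other delicate point is verifying that $\alpha$ can be chosen to keep the positive-probability gap strictly positive and independent of $n$ and $k$ (and, for the refinement, suitably controlled in $\epsilon$).
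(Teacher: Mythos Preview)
Your proposal is correct and follows essentially the same approach as the paper: a geometric search over a putative scale $\ktilde_j = (\sqrt{2})^j$, where at each scale an i.i.d.~Bernoulli density test (repeated $O(\log n)$ times) distinguishes $k \ge \ktilde_j$ from $k \le \ktilde_j/\sqrt{2}$ via a constant gap in the positive-test probability that survives the symmetric noise, with the correctness window then read off from the last two decisions. The only cosmetic differences are that the paper fixes the inclusion probability so that the noiseless positive probability equals exactly $\tfrac{1}{2}$ at $k = \ktilde_j$ (rather than choosing a small $\alpha$ to separate $1-e^{-\alpha}$ from $\alpha/\sqrt{2}$), and it handles the expected-test bound via a geometric-overshoot argument instead of your hard cap at $\ktilde = n$; neither affects the substance.
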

\begin{proof}
    The idea is to iteratively perform a sequence of sufficiently reliable tests to check whether $k \le 2(\sqrt 2)^i$ for $i \in \{0,1,2,\dotsc\}$, until the answer is affirmative.  The details are given in Appendix \ref{app:est_k}.
\end{proof}

\subsection{Repetition Testing} \label{sec:repetition}

Along with noisy binary search itself, we will additionally use suitably-chosen repeated tests to combat the noise.  The relevant auxiliary results for this purpose are given as follows.

{\bf Basic result.} Suppose that we observe a binary value $v$ through a noisy symmetric channel that produces the correct output with probability $1 - \rho$, where $\rho \in \big(0, \frac{1}{2}\big)$.  We would like to determine $v$ with error probability at most $\delta$.  The following lemma gives a standard upper bound on the number of repeated observations that we require.

\begin{lem}
    \label{lem:repetition1}
    {\em (Estimation from Repeated Observations)}
    If we perform $\frac{\log(1/\delta)}{D(\frac{1}{2}||\rho)}$ or more independent noisy observations of $v$ with flip probability $\rho \in \big(0,\frac{1}{2}\big)$ and report the majority outcome, then the error probability is at most $\delta$.  
\end{lem}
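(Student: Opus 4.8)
The plan is to reduce the majority-vote error event to a binomial upper-tail probability and then control that tail by a Chernoff-type estimate. Write $m$ for the number of independent observations, where $m \ge \frac{\log(1/\delta)}{D(\frac{1}{2}\|\rho)}$, and let $N$ denote the number of observations that the channel flips, so that $N \sim \Binomial(m,\rho)$. The majority-vote decision for $v$ is incorrect only if at least half of the observations disagree with the true value, i.e.\ only if $N \ge m/2$ (breaking ties against us, which can only enlarge the error event). Hence it suffices to show that $\PP[N \ge m/2] \le \delta$.

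The first step I would carry out is the algebraic identity
\[
    D(\tfrac{1}{2}\|\rho) = \tfrac{1}{2}\log\tfrac{1/2}{\rho} + \tfrac{1}{2}\log\tfrac{1/2}{1-\rho} = \tfrac{1}{2}\log\tfrac{1}{4\rho(1-\rho)},
\]
which converts the divergence rate into an explicit form. Next, since $\rho < \frac{1}{2}$ gives $\rho < 1-\rho$, the summand $\binom{m}{j}\rho^j(1-\rho)^{m-j}$ is decreasing in $j$, so for every $j \ge m/2$ we have $\rho^j(1-\rho)^{m-j} \le \big(\rho(1-\rho)\big)^{m/2}$. Summing over $j \ge m/2$ and using $\sum_{j=0}^m \binom{m}{j} = 2^m$ then yields
\[
    \PP[N \ge m/2] \le \big(\rho(1-\rho)\big)^{m/2}\, 2^m = \big(4\rho(1-\rho)\big)^{m/2} = \exp\!\big(-m\,D(\tfrac{1}{2}\|\rho)\big).
\]
Substituting $m \ge \frac{\log(1/\delta)}{D(\frac{1}{2}\|\rho)}$ gives $\PP[N \ge m/2] \le e^{-\log(1/\delta)} = \delta$, and because the right-hand side is decreasing in $m$, taking $\frac{\log(1/\delta)}{D(\frac{1}{2}\|\rho)}$ observations ``or more'' only improves the bound, which also absorbs the rounding of this quantity to an integer.

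As an alternative to the direct summation, one could instead invoke the standard Chernoff bound $\PP[N \ge am] \le \exp\!\big(-m\,D(a\|\rho)\big)$ for $a > \rho$ at $a = \frac{1}{2}$, obtained by optimizing $\EE[e^{\lambda N}]\,e^{-\lambda a m}$ over $\lambda > 0$; this produces the identical exponent. I do not expect any substantial obstacle here, as this is a routine large-deviations calculation. The only points needing minor care are the tie-breaking convention when $m$ is even and the integer-rounding of the stated sample count, both of which are handled by the monotonicity of the tail bound in $m$.
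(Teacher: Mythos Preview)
Your proof is correct and follows essentially the same approach as the paper: reduce the majority-vote error to the binomial tail $\PP[N \ge m/2]$ and bound it by $e^{-m\,D(\frac{1}{2}\|\rho)}$, then substitute the stated value of $m$. The paper simply cites the Chernoff bound for binomials to obtain that exponent, whereas you additionally supply the elementary derivation via $(4\rho(1-\rho))^{m/2}$; this is a nice self-contained touch but not a different method.
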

\begin{proof}
    See Appendix \ref{app:repetition}.
\end{proof}

{\bf Guarantee for multiple uses.} A standard approach to applying Lemma \ref{lem:repetition1} {\em multiple times} is to choose $\delta$ small enough for a union bound to keep the probability of {\em any} error small.  
%
In one version of our algorithm, we will instead consider an approach that uses fewer tests at the expense of making a small number of errors (which are later corrected).  Formally, we have the following.

\begin{lem}
    \label{lem:repetition2} 
    {\em (Repetition Testing for Multiple Items)}
    Suppose that we perform $\ttil$ individual tests of $k_0$ non-defectives and $k_1$ defectives (i.e.,  $\ttil (k_0 + k_1)$ tests in total).  For any fixed constant $\zeta \in (\rho,1-\rho)$ and any $\delta_0 \in (0,1)$, $\delta_1  \in (0,1)$ and $\epsilon_1  \in (0,1)$, suppose that
    \begin{equation}
        \ttil \ge \max\bigg\{ \frac{\log \frac{k_0}{\delta_0}}{ D_2(\zeta \| \rho) }, \frac{\log \frac{1}{\epsilon_1 \delta_1}}{ D_2(\zeta \| 1 - \rho) } \bigg\}. \label{eq:t_rep2}
    \end{equation}
    Then, if $k_0 = o(k_1)$, we for sufficiently large $n$ that, with probability at least $1 - \delta_0 - \delta_1$, the $(1-\epsilon_1)k_1$ items that returned positive the highest number of times are all defective.
\end{lem}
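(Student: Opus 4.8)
The plan is to fix a threshold $\tau = \zeta\ttil$ on the number of positive outcomes per item and show that, with high probability, every non-defective lies strictly below $\tau$ while all but a small fraction of the defectives lie strictly above it; a clean separation of the top-ranked items then follows immediately. First I would record the per-item statistics: since each item is tested $\ttil$ times through a channel with flip probability $\rho$, a non-defective produces a positive-count distributed as $\Bi(\ttil,\rho)$ (mean fraction $\rho < \zeta$), whereas a defective produces a count distributed as $\Bi(\ttil,1-\rho)$ (mean fraction $1-\rho > \zeta$). The hypothesis $\zeta \in (\rho,1-\rho)$ is exactly what places the threshold strictly between the two means, so that both of the relevant tail events are exponentially unlikely.

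Next I would control the non-defectives. By the standard relative-entropy Chernoff bound, a single non-defective exceeds the threshold with probability $\PP[\Bi(\ttil,\rho) \ge \zeta\ttil] \le e^{-\ttil D_2(\zeta \| \rho)}$. A union bound over all $k_0$ non-defectives, together with the first term in \eqref{eq:t_rep2}, then yields that with probability at least $1-\delta_0$ every non-defective has a positive-count strictly below $\tau$; call this event $A$.

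The defectives require a different device, and this is the crux of the lemma. Applying the lower-tail Chernoff bound, a single defective falls below the threshold with probability $\PP[\Bi(\ttil,1-\rho) \le \zeta\ttil] \le e^{-\ttil D_2(\zeta \| 1-\rho)} =: p$. Writing $N$ for the number of defectives that fall below $\tau$, we have $\EE[N] \le k_1 p$, and rather than insisting that $N = 0$ (which would cost a $\log k_1$ factor in $\ttil$), I would \emph{allow} up to $\epsilon_1 k_1$ such failures and bound their count by Markov's inequality: $\PP[N > \epsilon_1 k_1] \le \EE[N]/(\epsilon_1 k_1) \le p/\epsilon_1$. The second term in \eqref{eq:t_rep2} is precisely what forces $p/\epsilon_1 \le \delta_1$; call the complementary event $B$, so $\PP[B] \ge 1-\delta_1$.

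Finally I would assemble the two pieces. On $A \cap B$, which holds with probability at least $1-\delta_0-\delta_1$ by a union bound, no non-defective reaches $\tau$ while at least $(1-\epsilon_1)k_1$ defectives exceed it strictly; hence there are at least $(1-\epsilon_1)k_1$ items above $\tau$ and every such item is defective, while all remaining items lie strictly below $\tau$. Consequently the $(1-\epsilon_1)k_1$ items with the largest positive-counts must be drawn from this above-threshold, all-defective set, which gives the claim. The assumption $k_0 = o(k_1)$ and taking $n$ large only serve to make rounding $(1-\epsilon_1)k_1$ to an integer and similar asymptotic bookkeeping harmless. The one genuinely delicate step is the use of Markov (rather than a union bound) on $N$: this is what replaces the $\log k_1$ that a naive argument would incur by the far smaller $\log\frac{1}{\epsilon_1\delta_1}$, and it is the reason the eventual test count can match the target in \eqref{eq:t_existing}.
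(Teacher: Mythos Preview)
Your proposal is correct and follows essentially the same approach as the paper: fix the threshold $\zeta\ttil$, use the relative-entropy Chernoff bound plus a union bound over the $k_0$ non-defectives, and use the same Chernoff bound plus Markov's inequality (rather than a union bound) to cap the number of sub-threshold defectives at $\epsilon_1 k_1$. You also correctly identify the Markov step as the key idea that avoids an extra $\log k_1$ factor.
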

\begin{proof}
    See Appendix \ref{app:repetition}.
\end{proof}

%
%
\section{Approach 1: Uniform High-Probability Correctness}  \label{sec:appr1}

As a simple starting point, we consider following the structure of Hwang's noiseless algorithm (Section \ref{sec:related}), while applying modified NBS (Section \ref{sec:modified}) with a small enough error probability such that every invocation simultaneously succeeds with high probability.

\subsection{Description of the Algorithm} \label{sec:a1_algo}

The algorithm is described in Algorithm \ref{alg:approach1}.  Here and subsequently, we use the terminology that a partition is {\em empty} if it contains no defectives.

\begin{algorithm}
    \begin{algorithmic}[1]
        \Require Number of items $n$ and defectives $k$, confidence parameter $\delta$
        \State Split the $n$ items into $k$ partitions of size $n/k$ each;\footnotemark
        \State Use repetition testing (Lemma \ref{lem:repetition1}) with confidence $\frac{\delta}{k}$ to test whether each partition contains a defective or not (i.e., for each partition, repeatedly test all the items in the partition together).
        \State For each partition that was declared to be non-empty, run MNBS (Lemma \ref{lem:mnbs}) with confidence parameter $\frac{\delta}{k}$, and add the result (if not given by $\phi$) to the estimated defective set.  Discard all partitions declared empty and all items declared defective, and return to Step 2 (or terminate once all partitions are discarded).
    \end{algorithmic}
    \caption{Description of Approach 1 \label{alg:approach1}}
\end{algorithm}

%

\footnotetext{Since we focus on the regime $k = o(n)$, the effect of rounding is asymptotically negligible, and is ignored in the analysis.  We also note that for the purposes of our theoretical analysis, this partitioning can be arbitrary.}

\subsection{Statement of Theoretical Guarantee} \label{sec:a1_statement}

We state the following recovery guarantee for the above algorithm.

\begin{thm} \label{thm:main_a1}
    Suppose that $k = o(n)$ as $n \to \infty$.  For any $\delta \in (0,1)$ such that $\frac{\delta}{k} = o(1)$, Algorithm \ref{alg:approach1} succeeds with probability at least $1-3\delta$ using an average number of tests satisfying
    \begin{equation}
        \EE[T] \le \bigg(\frac{k \log \frac{n}{k}}{ I(\rho) }\bigg)(1+o(1)) + O\Big( k\log\frac{k}{\delta}\Big). \label{eq:t_a1}
    \end{equation}
\end{thm}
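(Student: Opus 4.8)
The plan is to follow a standard high-probability correctness template: bound the success probability by a union bound over all subroutine calls, and bound $\EE[T]$ by linearity of expectation over those calls, invoking Lemma~\ref{lem:mnbs} for each MNBS call and Lemma~\ref{lem:repetition1} for each emptiness test. The key structural observation is that, in the \emph{idealized execution} where every emptiness test and every MNBS call returns the correct answer, the number of calls of each type is deterministic. Indeed, if partition $j$ contains $k_j$ defectives (so that $\sum_j k_j = k$ across the $k$ partitions), then partition $j$ is declared non-empty and has MNBS run on it in rounds $1,\dots,k_j$ (each round removing its current left-most defective), and is declared empty and discarded in round $k_j+1$. Thus the idealized execution uses exactly $\sum_j(k_j+1)=2k$ emptiness tests and $\sum_j k_j = k$ MNBS calls, with every emptiness test run at confidence $\frac{\delta}{k}$ and every MNBS call run on a sub-array of size at most $\frac{n}{k}$ at confidence $\frac{\delta}{k}$.

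For correctness, I would peel along the idealized call sequence: conditioned on all earlier calls being correct, the execution has stayed on the idealized path, so the next call is a genuine invocation failing with probability at most $\frac{\delta}{k}$ by the relevant lemma. A union bound over the $2k$ emptiness tests and the $k$ MNBS calls then yields an overall failure probability of at most $2k\cdot\frac{\delta}{k}+k\cdot\frac{\delta}{k}=3\delta$, which establishes the claimed success probability $1-3\delta$ and in particular shows that, on the good event, $\Shat=S$.

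For the number of tests on the idealized path, each emptiness test at confidence $\frac{\delta}{k}$ uses $O\big(\log\frac{k}{\delta}\big)$ pooled tests by Lemma~\ref{lem:repetition1}, contributing $2k\cdot O\big(\log\frac{k}{\delta}\big)=O\big(k\log\frac{k}{\delta}\big)$ in total, while each MNBS call uses at most $\frac{\log(n/k)}{I(\rho)}+O\big(\log\frac{k}{\delta}\big)+O\big(\log\log\frac{n}{k}\big)$ tests by Lemma~\ref{lem:mnbs} (using that every sub-array has size at most $\frac{n}{k}$). Summing the $k$ MNBS calls gives a leading term $\frac{k\log(n/k)}{I(\rho)}$; since $k=o(n)$ forces $\frac{n}{k}\to\infty$, the term $k\cdot O\big(\log\log\frac{n}{k}\big)$ is $o\big(\frac{k\log(n/k)}{I(\rho)}\big)$ and is absorbed into the $(1+o(1))$ factor, while the remaining $O\big(k\log\frac{k}{\delta}\big)$ contributions combine to match the second term of \eqref{eq:t_a1}.

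The main obstacle, and the part requiring genuine care, is that the number of calls becomes random once subroutine errors are allowed, so the idealized count only governs the good event; I must also control the expected \emph{extra} tests incurred on error events and confirm that the algorithm terminates with finite expectation. Here I would argue that a partition can survive a round beyond its idealized lifetime only if some subroutine call on it errs, an event of probability at most $\frac{\delta}{k}$, so the number of extra rounds for that partition is stochastically dominated by a run of consecutive errors and has expectation $O\big(\frac{\delta}{k}\big)$, for a total of $O(\delta)$ expected extra rounds across the $k$ partitions. Multiplying by the per-round cost $O\big(\frac{\log(n/k)}{I(\rho)}+\log\frac{k}{\delta}\big)$ shows these contributions are smaller than the leading term by a factor $O\big(\frac{\delta}{k}\big)=o(1)$, hence negligible. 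The condition $\frac{\delta}{k}=o(1)$ is precisely what makes this geometric tail summable (ensuring almost-sure termination with finite expected runtime) and renders the error contributions lower-order.
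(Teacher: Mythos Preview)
Your proposal is correct and follows the paper's proof essentially step for step: the paper likewise counts $2k$ emptiness tests and $k$ MNBS calls in the idealized run, takes a union bound over these $3k$ calls at confidence $\frac{\delta}{k}$ to obtain the $1-3\delta$ success probability, invokes Lemmas~\ref{lem:repetition1} and~\ref{lem:mnbs} for the per-call costs, and handles the extra tests from subroutine errors via the same geometric argument (independent failures, each inflating the call count by a multiplicative $1+O(\frac{\delta}{k})=1+o(1)$ factor). The $O(k\log\log(n/k))$ term is absorbed into the $(1+o(1))$ factor in both arguments.
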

 
This result already matches the capacity bound \eqref{eq:t_capacity} when $\delta \ge \frac{1}{\poly(k)}$ and $k = o(n^c)$ for arbitrarily small $c > 0$ (e.g., $k = \poly(\log n)$), in which case \eqref{eq:t_appr1} reduces to $\big(\frac{k \log \frac{n}{k}}{ I(\rho) }\big)(1+o(1))$.  However, it leaves an unspecified coefficient to the $O(k \log n)$ scaling when $k = \Theta(n^{\theta})$ with $\theta \in (0,1)$, due to the $O\big( k\log\frac{k}{\delta}\big)$ term.  To obtain an explicit constant, we need refined techniques, and these are explored in the subsequent sections.

\subsection{Proof of Theorem \ref{thm:main_a1}} \label{sec:a1_analysis}

{\bf Analysis of correctness.} Consider the algorithm shown above.  We observe that as long as Steps 2 and 3 never make incorrect decisions, the returned output will be correct.  Moreover, as long as these decisions remain correct, Step 2 will be executed $2k$ times (once per defective and once more per partition for when it becomes empty), and Step 3 will be executed $k$ times (once per defective).  Hence, by a union bound over these $3k$ calls and the choice of confidence parameter $\frac{\delta}{k}$, we find that the algorithm succeeds with probability at least $1-3\delta$.

{\bf Number of tests.} The number of tests contributed by the first $2k$ calls to Step 2 is $\frac{2k \log\frac{k}{\delta}}{ D(\frac{1}{2}\|\rho) }$ by Lemma \ref{lem:repetition1}, and the average number of tests contributed by the first $k$ calls to Step 3 is
\begin{equation}
    \frac{k \log \frac{n}{k}}{ I(\rho) } + O\Big( k\log\frac{k}{\delta}\Big) + O(k \log \log n) \label{eq:t_appr1}
\end{equation}
by Lemma \ref{lem:mnbs}.  While further tests may occur if some of these calls to Step 2 and 3 are erroneous, we claim that this only increases the overall average number of tests by a multiplicative $1+O\big(\frac{\delta}{k}\big)$ factor (which behaves as $1+o(1)$ by assumption).  To see this, note that subsequent calls to these steps fail independently of one another, so the number of failures before the first success follows a geometric distribution with success probability $1 - O\big(\frac{\delta}{k}\big)$.  Since the mean of the ${\rm Geometric}(p)$ distribution is $\frac{1}{p}$, the desired claim follows.

Hence, the overall average number of tests is given as in \eqref{eq:t_appr1} up to a multiplicative factor of $1+o(1)$.  Since the first two terms collectively scale as $O(k \log n)$, the final $O(k \log \log n)$ term can also be factored into the multiplication by $1+o(1)$, thus establishing \eqref{eq:t_a1}.

\subsection{Extension to Unknown $k$} \label{sec:a1_ext}

While we presented this approach assuming knowledge of $k$, the analysis goes through unchanged when only an upper bound $\kbar$ is known to the algorithm (again assuming $\kbar = o(n)$), and accordingly $k$ is replaced by $\kbar$ in the number of tests and the error probability.  In fact, the fraction in the first term in \eqref{eq:t_a1} can be $\frac{k \log \frac{n}{\kbar}}{ I(\rho) }$ (with a leading term of $k$ instead of $\kbar$), but the second term $O\big( \kbar \log\frac{\kbar}{\delta}\big)$ may dominate if the upper bound $\kbar$ is much larger than $k$.
  
%
%
\section{Approach 2: Certification of Defectives}  \label{sec:appr2}

A key weakness in Approach 1 is that applying noisy binary search with a confidence of $O\big( \frac{1}{k} \big)$ leads to a number of tests with the $O(k \log k)$ term having an unspecified implied constant, since the main result on NBS in \cite{Ben08} does not specify the coefficient to $O\big(\log\frac{1}{\delta}\big)$.  In this section, we consider a refined approach that applies MNBS with lower confidence (namely, $O\big( \frac{1}{\log n} \big)$) and then uses repetition testing to certify any items that MNBS declares to be defective.  To help better understand this approach, we first outline a simpler version that uses such repetition testing for certifying both defective items and non-defective partitions.

\subsection{Outline of Simpler Approach} \label{sec:simpler} 

A simple refinement of Approach 1 is to first use a higher target error probability (e.g., $O\big(\frac{1}{\log n}\big)$ instead of $O\big(\frac{\delta}{k}\big)$) in noisy binary search and repetition testing, and then only using the smaller $O\big(\frac{\delta}{k}\big)$ confidence parameter for certification, i.e., double-checking that the partition is empty, or double-checking that an estimated defective is indeed defective.

%

For brevity, we do not study this variant in detail, but we mention that it leads to the following number of tests when $\delta$ decays to zero sufficiently slowly:
\begin{equation*}
    \EE[T] \le \bigg( \frac{k \log \frac{n}{k}}{ I(\rho) } + \frac{2k \log k}{ D(\frac{1}{2}\|\rho) }\bigg) (1+o(1)). \label{eq:ap3_bound}
\end{equation*}
While this is a significant improvement on Theorem \ref{thm:main_a1}, the factor of $2$ in the second term is not ideal.  Roughly speaking, this comes from paying a price of $\frac{k \log k}{ D(\frac{1}{2}\|\rho) }$ for certifying each defective (see Lemma \ref{lem:repetition1}), {\em and} a price of $\frac{k \log k}{ D(\frac{1}{2}\|\rho) }$ for certifying each partition as having no remaining defectives.  This motivates the main algorithm of this section described in the following, which only performs high-probability certification of defectives, and reduces the above-mentioned constant from $2$ to $1$.

\subsection{Description of the Algorithm} \label{sec:a2_algo}

We describe the algorithm in two parts.  Note that here we do not assume any prior knowledge of $k$ -- not even an upper bound.

{\bf Description of outer loop.} We first introduce an outer loop, where the goal of each iteration is to identify a constant fraction of the remaining defectives.  The outer loop is described in Algorithm \ref{alg:approach2a}.

\setcounter{algorithm}{0}
\renewcommand\thealgorithm{2\alph{algorithm}} 

\renewcommand\alglinenumber[1]{\footnotesize\Roman{ALG@line}:}

\begin{algorithm}
    \begin{algorithmic}[1]
        \Require Number of items $n$, confidence parameters $\delta_{\rm est}$ and $\delta$,\footnotemark constant $C$
        \State Run the sub-routine for estimating $k$ (Lemma \ref{lem:est_k}), with confidence parameter $\delta_{\rm est}$.  Let the returned value be $\kbar$, which should ideally satisfy $\frac{\kbar}{2} \le k \le \kbar$.
        \State If $\kbar \le C \log n$, then estimate the remaining defectives using Approach 1, with $\kbar$ in place of $k$ and confidence level $\delta$, and terminate the algorithm returning all estimated defectives.
        \State Otherwise, run the group testing subroutine (inner algorithm) below, append the returned subset to the overall set of estimated defectives, remove that subset from further consideration, and return to Step I.
    \end{algorithmic}
    \caption{Outer algorithm for Approach 2 \label{alg:approach2a}}
\end{algorithm}

%

\footnotetext{We will choose $\delta_{\rm est} = O(n^{-c})$ for arbitrarily large $c > 0$, and leave $\delta$ as a free parameter, leading to the algorithm succeeding with probability $1-O(\delta)-O(n^{-c})$.}

{\bf Description of inner algorithm.} In the following, we refer to running repetition testing according to Lemma \ref{lem:repetition1}, and declaring the defectivity status according to a majority vote, as {\em repetition-based certification}. The group testing subroutine for Step III is described in Algorithm \ref{alg:approach2b}.

%

Note that in contrast to Approach 1, we only attempt to find a single defective in each partition, rather than returning to Step 2 in order to seek any further ones; the same effect is instead captured by the outer loop.  In addition, in view of the discussion in Section \ref{sec:simpler}, any partitions believed to be empty are simply ignored, rather than seeking to certify that they are non-defective.

\renewcommand\alglinenumber[1]{\footnotesize\arabic{ALG@line}:}

\begin{algorithm}
    \begin{algorithmic}[1]
        \Require Number of items $n$, estimated upper bound $\kbar$, confidence parameter $\delta$
        \State Randomly split the items into $\kbar$ partitions of size $\frac{n}{\kbar}$, uniformly at random.
        \State Use repetition testing with confidence $\frac{1}{\log n}$ to test whether each partition contains a defective.
        \State For each partition tested in Step 2, if it was declared non-empty, then:
        \begin{itemize}[leftmargin=4ex]
            \item[(a)] Run modified noisy binary search (MNBS) on the partition with confidence parameter $\frac{1}{\log n}$.
            \item[(b)] If MNBS does not return $\phi$,\footnotemark perform repetition-based certification on the returned item with confidence parameter $\frac{\delta}{k}$.  If the item is still declared to be defective, add it to the estimated defective subset.
        \end{itemize} 
        \State Return the estimated subset of defectives.
    \end{algorithmic}
    \caption{Inner algorithm for Approach 2 \label{alg:approach2b}}
\end{algorithm}

\subsection{Statement of Theoretical Guarantee} \label{sec:a2_statement}

We state the following recovery guarantee for the above algorithm.  We focus on the scaling regime $k = \omega(\log n)$, noting that for any smaller $k$, the same result (with the number of tests simplifying to $\frac{k \log \frac{n}{k}}{ I(\rho) } (1+o(1))$) can be obtained via Approach 1, at least under the assumptions on $\delta$ used here.

\footnotetext{Recall that MNBS returns $\phi$ to indicate that it believes all items to be non-defective.}

\begin{thm} \label{thm:main_a2}
    Suppose that $k = \omega(\log n)$ and $k = o(n)$ as $n \to \infty$.  For any $\delta \in (0,1)$ satisfying $\delta \ge e^{-\psi_k}$ for some $\psi_k = o(k)$, and any constant $c > 0$, Algorithm \ref{alg:approach2a} (with suitably-chosen parameters) succeeds with probability $1-O(\delta)-O(n^{-c})$ using an average number of tests satisfying
    \begin{equation}
        \EE[T] \le \bigg( \frac{k \log \frac{n}{k}}{ I(\rho) } + \frac{k \log \frac{k}{\delta}}{D(\frac{1}{2} \| \rho)}\bigg) (1+o(1)). \label{eq:t_appr2}
    \end{equation}
\end{thm}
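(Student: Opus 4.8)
The plan is to track the number of undiscovered true defectives across the outer-loop iterations of Algorithm \ref{alg:approach2a} and to account for the success probability separately from the expected number of tests. Write $k_i$ for the number of true defectives remaining in the pool at the start of round $i$ (so $k_0=k$), and recall that only \emph{certified} items are ever removed, so no defective is ever lost and the pool size stays $\Theta(n)$ throughout (hence partitions have size $\approx n/\kbar$). The two design features I would exploit are: (i) MNBS and the partition-emptiness tests are run at the \emph{low} confidence $\frac{1}{\log n}$, so by Lemmas \ref{lem:mnbs} and \ref{lem:repetition1} each such call costs $\frac{\log(n/\kbar)}{I(\rho)}$ (resp.\ $O(\log\log n)$) plus an additive $O(\log\log n)$ overhead; and (ii) the expensive certification at confidence $\frac{\delta}{k}$ is applied \emph{only} to the single candidate returned by each non-empty partition, so there is essentially one certification per discovered defective.

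First I would establish geometric progress. Throwing $k_i$ defectives uniformly into $\kbar\approx k_i$ partitions, a balls-into-bins argument shows a constant fraction (about $1-\frac{1}{e}$) are alone in their partition; for each such isolated defective the emptiness test, MNBS, and certification each fail only with probability $O(\frac{1}{\log n})$ or $\frac{\delta}{k}$, so it is discovered with probability $1-o(1)$. A concentration bound then gives that, conditioned on an accurate estimate $\frac{\kbar}{2}\le k_i\le\kbar$, round $i$ removes at least a constant fraction of $k_i$ with high probability, so $k_{i+1}\le(1-c')k_i$ and the loop reaches the terminal regime $\kbar\le C\log n$ in $O(\log k)$ rounds. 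This bounds the round count and guarantees that when the algorithm switches to Approach 1 the residual defective count is $O(\log n)=o(n)$, so Theorem \ref{thm:main_a1} clears the remainder.

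Next I would assemble the expected test count as a sum over rounds. The certification cost equals the number of candidate items times $\frac{\log(k/\delta)}{D(\frac{1}{2}\|\rho)}$; since there is one candidate per discovered defective plus only $O(k/\log n)$ spurious candidates (from emptiness/MNBS errors on defective-free partitions), the total number of certifications is $k(1+o(1))$, yielding the second term $\frac{k\log(k/\delta)}{D(\frac{1}{2}\|\rho)}(1+o(1))$. For MNBS, a defective discovered in round $i$ lies in a partition of size $\approx n/k_i$ and costs $\frac{\log(n/k_i)}{I(\rho)}$; writing $\log(n/k_i)=\log(n/k)+\log(k/k_i)$ and using $k_i\approx k e^{-c' i}$, the weighted telescoping sum $\sum_i (k_i-k_{i+1})\frac{\log(n/k_i)}{I(\rho)}$ evaluates to $\frac{k\log(n/k)}{I(\rho)}(1+o(1))+O(k)$, giving the first term. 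All remaining contributions — the additive $O(\log\log n)$ per MNBS call, the $O(\log\log n)$-per-partition emptiness tests (with $\sum_i\kbar_i=O(k)$ of them), the $O(\log k\cdot\log n)$ estimation cost of Lemma \ref{lem:est_k} over $O(\log k)$ rounds, and the terminal Approach 1 stage — are each $o$ of $\frac{k\log(n/k)}{I(\rho)}+\frac{k\log(k/\delta)}{D(\frac{1}{2}\|\rho)}$, which I would verify using $k=\omega(\log n)$, $\log\frac{1}{\delta}=o(k)$, and $\log\log n=o(\max\{\log(n/k),\log(k/\delta)\})$ (the max being $\Omega(\log n)$), so they fold into the $(1+o(1))$. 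As in Theorem \ref{thm:main_a1}, the occasional erroneous rounds inflate the expectation only by a further $1+o(1)$ via the geometric-number-of-retrials argument.

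Finally, for the success probability I would union-bound the only genuinely fatal events. A non-defective is added only if its certification (confidence $\frac{\delta}{k}$) errs; over the $k(1+o(1))$ certifications this contributes $O(\delta)$. Every estimation call (confidence $\delta_{\rm est}=O(n^{-c})$) over $O(\log k)$ rounds contributes $O(n^{-c}\log n)=O(n^{-c'})$, and conditioned on all estimates being accurate the terminal Approach 1 invocation fails with probability $O(\delta)$ by Theorem \ref{thm:main_a1}. Crucially, missed defectives and failed low-confidence steps are \emph{not} failures: they merely defer discovery to a later round, already absorbed into the $(1+o(1))$ test overhead. I expect the main obstacle to be the geometric-progress step — making the ``constant fraction discovered per round'' claim hold with high enough probability to simultaneously control the round count, the residual at termination, and the telescoping MNBS sum — since this is where the random partitioning, the low-confidence error rates, and the re-estimation of $\kbar$ all interact.
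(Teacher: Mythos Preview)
Your proposal is correct and follows essentially the same approach as the paper: establish geometric decay of the remaining defectives via random partitioning plus concentration (you use the fraction of \emph{isolated} defectives, the paper uses the fraction of \emph{non-empty} partitions, but both give a constant fraction discovered per round with failure probability $e^{-\Omega(\kbar)}=O(n^{-c})$ once $\kbar\ge C\log n$), count $k(1+o(1))$ certifications at confidence $\delta/k$ to obtain the second term, sum the MNBS costs over rounds to obtain the first term, and absorb all remaining contributions as $o(1)$ using $k=\omega(\log n)$ and $\log(1/\delta)=o(k)$. Your telescoping computation $\sum_i(k_i-k_{i+1})\log(n/k_i)=k\log(n/k)+O(k)$ is in fact slightly more explicit than the paper's treatment of the MNBS cost, and your identification of the geometric-progress step as the place where the $O(n^{-c})$ bound must be secured is exactly right.
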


Compared to Theorem \ref{thm:main_a1}, this result is much more reminiscent of the best known existing result \eqref{eq:t_existing}, though still falls slightly short due to the denominator of $D(\frac{1}{2} \| \rho)$ instead of $D(1-\rho \| \rho)$ in the second term.

We note that the assumption $\delta \ge e^{-\psi_k}$ with $\psi_k = o(k)$ is fairly mild, allowing the target error probability to be nearly exponentially small in $k$.  The additive term $O(n^{-c})$ in the error probability amounts to a slightly stronger restriction, but it is still fairly mild since we allow $c$ to be arbitrarily large.

\subsection{Proof of Theorem \ref{thm:main_a2}} \label{sec:a2_analysis}

{\bf Analysis of correctness.} We first establish the correctness of the algorithm; in accordance with the theorem statement, we would specifically like to show the following:
\begin{equation}
    \text{The algorithm succeeds with probability $1 - O(\delta) - O(n^{-c})$}, \label{eq:target}
\end{equation}
where $c > 0$ may be made arbitrarily large by suitable choices of the algorithm's parameters.

Observe that if the estimate $\kbar$ in the outer algorithm (Step I) remains valid (i.e., with $k'$ current defectives, it holds that $\frac{\kbar}{2} \le k' \le \kbar$), and the inner algorithm always finds at least a constant fraction of the remaining defectives, then the number of outer iterations will be $O(\log k)$.  We will show that this is indeed the case, with high probability.

Regarding $\kbar$, we know that a single invocation of Lemma \ref{lem:est_k} gives the desired event with probability $1-O(n^{-c'})$ for arbitrary $c' > 0$, and hence, its first $O(\log k)$ (or even ${\rm poly}(n)$) invocations simultaneously succeed with probability at least $1-O(n^{-c})$, where again $c > 0$ can be arbitrarily large.  Hence, any errors here only contribute to the $O(n^{-c})$ part in \eqref{eq:target} .

To characterize the inner algorithm, we first show that with high probability, a constant fraction of the partitions are non-empty.  Recall that the partitions are formed uniformly at random.  Fix $\alpha \in (0,1)$, and note that if $k' \in \big[\frac{\kbar}{2}, \kbar\big]$ is the current remaining number of defectives,\footnote{The current number of non-discarded items could be denoted by a different symbol $n' \le n$, but we use $n$, as doing so does not impact the final result.  Moreover, one could always pad the current list of items with dummy non-defectives to keep the total at $n$ throughout.} the probability of $(1-\alpha) \kbar$ specific partitions of size $\frac{n}{\kbar}$ all being empty is
\begin{equation}
    \frac{ {\alpha n \choose k'} }{ {n \choose k'} } \le \frac{ \big( \frac{ne\alpha}{k'} \big)^{k'} }{  \big( \frac{n}{k'} \big)^{k'} } = (e\alpha)^{k'}.
\end{equation}
Setting $\alpha = \frac{1}{9e}$, this simplifies to $9^{-k'} \le 9^{-\kbar/2} = 3^{-\kbar}$.  Taking the union bound over all ${\kbar \choose (1-\alpha) \kbar} \le 2^{\kbar}$ choices of the $(1-\alpha) \kbar$ partitions, it follows that the probability of having below a $\frac{1}{9e}$ fraction of non-empty partitions behaves as $e^{-\Omega(\kbar)}$.  Moreover, since we designed the outer algorithm to ensure that $\kbar \ge C \log n$ for $C$ that we can choose to our liking, the preceding probability reduces to $O(n^{-c'})$ for arbitrary $c' > 0$.  Similarly to the analysis of $\kbar$ above, we can then apply the union bound over the first $O( \log k )$ invocations of the inner algorithm, and the resulting expression only contributes to the $O(n^{-c})$ part in \eqref{eq:target}.

Now, given that at least a $\frac{1}{9e}$ fraction of the partitions are non-empty, we claim that at least half of these non-empty partitions will reach Step 3(b) (i.e., repetition testing will identify that a defective is present in Step 2, and MNBS will not return $\phi$ in Step 3(a)).  This is because when a defective is present, the probability of one or both of Step 2 or Step 3(a) failing is at most $\frac{2}{\log n}$.  Hence, the number of successes dominates a binomial distribution with $\frac{\kbar}{9e}$ trials and success probability $1 - \frac{2}{\log n}$.  By a standard Chernoff bound, this implies that at least $\frac{\kbar}{10e}$ trials succeed, with probability $1 - e^{-\Omega(\kbar)}$ (see Appendix \ref{app:mult_chernoff}).  Again using $\kbar \ge C \log n$ with arbitrarily large $C$, the $e^{-\Omega(\kbar)}$ term here only contributes to the $O(n^{-c})$ term in \eqref{eq:target}.

For each empty partition, Step 3 is only entered with probability $\frac{1}{\log n}$, i.e., the confidence parameter used in Step 2.  In any given invocation of the inner algorithm, there are only $\kbar$ partitions, and from this, we can again apply the Chernoff bound, this time leading to the statement that Step 3(b) is reached for at most $O\big(\frac{\kbar}{\log \log n}\big) = o(\kbar)$ non-defective partitions, with probability $1 - e^{-\Omega(\kbar)}$  (see Appendix \ref{app:mult_chernoff}).  This probability again factors into the $O(n^{-c})$ term.

Next, we sum over the multiple invocations of the inner algorithm, denoting the $\kbar$ value in the $i$-th invocation by $\kbar_i$, and the true number of remaining defectives by $k_i$.  Under the above high-probability events, we have $k_{i+1} \le k_i - \frac{\kbar_i}{10e}$ and $\frac{\kbar_i}{2} \le k_i \le \kbar_i$, and combining these facts gives $\sum_{i} \kbar_i = O(k)$ (noting that $\sum_{i=1}^{\infty} a^i < \infty$ for any $a \in (0,1)$).  Thus, even when we sum all $\kbar_i$ values across all invocations, the total remains linear in $k$.

Now, consider the event of Step 3(b) succeeding on all of its first $O(k)$ invocations.  By the union bound, this occurs with probability at least $1-O(\delta)$, thus leading to the $O(\delta)$ term in \eqref{eq:target}.  In addition, we observe that this implies no errors being made by Step III of the outer algorithm.  This is because whenever a defective is tested in Step 3(b) it is correctly added to the estimate, whereas when a non-defective is tested, it is not added.

Applying the union bound over all of the failure events above, we deduce that the success probability takes the desired form $1-O(\delta) - O(n^{-c})$. The reversion to Approach 1 in Step II additionally contributes to the $O(\delta)$ term.

{\bf Number of tests.} We first study the number of tests conditioned on the high-probability events used in the analysis above, and then turn to the unconditional average.  Under the above high-probability events, we have the following:
\begin{itemize}
    \item In Step I, we use an average of $O(\log k \cdot \log n)$ tests per invocation for estimating the number of defectives (see Lemma \ref{lem:est_k}), for a total of $O(\log^2 k \cdot \log n)$ in the $O(\log k)$ invocations.
    \item Step 2 is invoked once per partition, and having established that $\sum_{i} \kbar_i = O(k)$, the resulting total number of tests is
    \begin{equation}
        \frac{(\sum_i \kbar_i) \log \log n}{ D(\frac{1}{2}\|\rho) } = O(k \log \log n),
    \end{equation}
    where the $\log \log n$ term comes from $\log\frac{1}{\delta'}$ with $\delta' = \frac{1}{\log n}$.
    \item There are $k$ invocations of Step 3 that lead to identifying a defective in Step 3(b), and by Lemma \ref{lem:mnbs}, these incur a total average number of tests given by
    \begin{equation}
        \bigg( \underbrace{\frac{k \log \frac{n}{k}}{ I(\rho) } + O(k \log \log n)}_{\text{Step 3(a)}} + \underbrace{\frac{k \log\frac{k}{\delta}}{ D(\frac{1}{2}\|\rho) }}_{\text{Step 3(b)}} \bigg) (1+o(1)). \label{eq:a3_tests3b}
    \end{equation}
    Here the $1+o(1)$ term is included partly for later convenience, but also due to the subtle issue that in Step 3(a), average the number of tests {\em conditioned on the decision being correct} could in principle be different from the unconditional average.  However, denoting the random number of tests by $\Ttil$ and the error event by $\Ec$, we have $\EE[\Ttil] = \EE[\Ttil | \Ec] \PP[\Ec] + \EE[\Ttil | \Ec^c] \PP[\Ec^c] \ge \EE[\Ttil | \Ec^c] \PP[\Ec^c]$, so the fact that $\PP[\Ec^c] = 1-o(1)$ readily implies $\EE[\Ttil | \Ec^c] \le \EE[\Ttil](1+o(1))$.
    \item Regarding the ``incorrect'' invocations of Step 3 in which the item under consideration is non-defective, we have established that their number is smaller by a factor of $\frac{1}{\log \log n} = o(1)$.  Hence, combining these invocations with those considered in the previous dot point, the number of tests is still \eqref{eq:a3_tests3b} with the differences only amounting to the $1 + o(1)$ term.  By a similar argument, the number of tests corresponding to Step 3(a) failing for a defective item also only contributes to this multiplicative $1 + o(1)$ term. 
    \item In Step II of the outer algorithm, we run Approach 1 with $\kbar = O(\log n)$.  By Theorem \ref{thm:main_a1} (with $\kbar$ in place of $k$ as discussed in Section \ref{sec:a1_ext}), this requires $O((\log n)^2 + \log n \cdot \log \frac{1}{\delta})$ tests.
\end{itemize}
We now sum the number of tests listed above.  Note that \eqref{eq:a3_tests3b} already contributes $O\big(k\log\frac{n}{k} + k\log k\big) = O(k\log n)$.  This means that the $O(k \log \log n)$ and $O(\log^2 k \cdot \log n)$ terms immediately amount to at most multiplication by $1+o(1)$.  Moreover, our assumption on $\delta$ ensures that the same is true of $\log n \cdot \log \frac{1}{\delta}$, since $\delta \ge e^{-o(k)}$ is equivalent to $\log\frac{1}{\delta} \le o(k)$.  Thus, we are left with the simplified number of tests given in \eqref{eq:t_appr2}, as desired.

It remains to argue that the {\em unconditional} average number of tests also satisfies the same bound.  Similarly to the analysis of Approach 1, this follows from the fact that subsequent invocations of all steps use independent tests.  Thus, for any given step, the number of failures until the first success follows a geometric distribution with success probability $1-o(1)$, amounting to an average of $1+o(1)$ invocations until the first success.  The events of these steps failing have no ``knock-on'' effects for future steps, with the possible exception that the incorrect estimation of $k$ could lead to using an unusually large number of tests.  However, the estimation subroutine only fails with probability $O(n^{-c})$ for arbitrarily large $c$, so even the extreme case of returning $\kbar = n$ (and therefore performing one-by-one testing of items, each being in their own ``partition'') only amounts to a negligible $o(1)$ contribution to the overall average number of tests.

%

%
%
\section{Approach 3: Approximate Certification}  \label{sec:appr3}

We are now in a position to state the version of our algorithm that provides the strongest recovery guarantee, matching that of \cite{Sca19} but with notable advantages that we will discuss in Section \ref{sec:comparison}.  Inspired by the adaptive algorithms in \cite{Sca18,Sca19} with 3 or 4 stages, the idea is to use fewer tests in the certification (repetition testing) step at the expense of misclassifying a small number of items, but correcting those mistakes using a small number of additional tests at the end.  We again avoid assuming any prior knowledge on $k$ (even an upper bound).

\subsection{Description of the Algorithm} \label{sec:a3_algo}

In this final variant, we modify Approach 2 to produce a refined variant, detailed in Algorithm \ref{alg:approach3a}.

\setcounter{algorithm}{0}
\renewcommand\thealgorithm{3\alph{algorithm}} 
\renewcommand\alglinenumber[1]{\footnotesize\Roman{ALG@line}:}

\begin{algorithm}
    \begin{algorithmic}[1]
        \Require Number of items $n$, confidence parameters $\delta$ and $\delta_{\rm est}$, individual testing parameter $t_{\rm indiv}$, and constants $C$ and $\epsilon$
        \State Run the sub-routine for estimating $k$ (second part of Lemma \ref{lem:est_k}), with confidence parameter $\delta_{\rm est}$ and approximation parameter $\epsilon$ such that the returned value $\kbar$ should satisfy $(1-\epsilon)\kbar \le k \le \kbar$.\footnotemark  On the first invocation of this step, additionally set $\kbar_{\rm init} = \kbar$ for later use.
        \State If $\kbar \le C \log n$, then estimate the remaining defectives using Approach 1, with $\kbar$ in place of $k$ and confidence parameter $\delta$, then proceed to Step IV.
        \State (If $\kbar > C \log n$) Run the group testing subroutine (inner algorithm) below with parameter $t_{\rm indiv}$, and append the returned set of (item, count) pairs to a list $\Lc$.  Remove all such items from further consideration; return to Step I.
        \State For the (item, count) pairs in $\Lc$ with the $(1-2\epsilon)\kbar_{\rm init}$ highest counts, add each corresponding item to the set of estimated defectives.
        \State For the remaining $|\Lc| - (1-2\epsilon)\kbar_{\rm init}$ items, perform individual testing with confidence parameter $\frac{\delta}{k}$, and declare the items with at least half positive tests as defective.
        \State Return the set of all defectives identified in Steps II, IV, and V.
    \end{algorithmic}
    \caption{Outer algorithm for Approach 3 \label{alg:approach3a}}
\end{algorithm}

\footnotetext{In fact, we only need to let $\epsilon$ be small on the first invocation, and subsequent invocations can revert to $\epsilon = \frac{1}{2}$ as used in Approach 2.}

The inner algorithm is almost identical to that of Approach 2, but we provide the full details for convenience; see Algorithm \ref{alg:approach3b}.

\renewcommand\alglinenumber[1]{\footnotesize\arabic{ALG@line}:}

\begin{algorithm}
    \begin{algorithmic}[1]
        \Require Number of items $n$, estimated upper bound $\kbar$, confidence parameter $\delta$, individual testing parameter $t_{\rm indiv}$
        \State Randomly split the items into $\kbar$ partitions of size $\frac{n}{\kbar}$, uniformly at random.
        \State Use repetition testing with confidence $\frac{1}{\log n}$ to test whether each partition contains a defective or not.
        \State For each partition tested in Step 2, if it was declared non-empty, then:
        \begin{itemize}[leftmargin=4ex]
            \item[(a)] Run modified noisy binary search (MNBS) on the partition with confidence parameter $\frac{1}{\log n}$.
            \item[(b)] If MNBS does not return $\phi$, perform $t_{\rm indiv}$ individual tests on the returned item, and append the resulting (item, count) pair (with ``count'' being the number of positive tests) to the output list.
        \end{itemize} 
        \State Return the produced list of (item, count) pairs.
    \end{algorithmic}
    \caption{Inner algorithm for Approach 3 \label{alg:approach3b}}
\end{algorithm}

%

\subsection{Statement of Theoretical Guarantee} \label{sec:a3_statement}

We state the following recovery guarantee for the above algorithm.  Similarly to Theorem \ref{thm:main_a2}, we focus on the scaling regime $k = \omega(\log n)$.

\begin{thm} \label{thm:main_a3}
    Suppose that $k = \omega(\log n)$ and $k = o(n)$ as $n \to \infty$.  For any $\delta \in (0,1)$ satisfying $\delta \ge e^{-\psi_k}$ for some $\psi_k = o(k)$, any positive $\delta_0,\delta_1$ such that $\delta_0 + \delta_1 \le O(\delta)$, and any constant $c > 0$, Algorithm \ref{alg:approach3a} with suitably-chosen $t_{\rm indiv}$ (see \eqref{eq:t_indiv2} below), succeeds with probability $1-O(\delta)-O(n^{-c})$ using an average number of tests satisfying
    \begin{equation}
        \EE[T] \le \inf_{\zeta \in (\rho,1-\rho)} \bigg( \frac{k \log \frac{n}{k}}{ I(\rho) } + \max\bigg\{ \frac{k \log \frac{k}{\delta_0}}{D_2(\zeta \| \rho)}, \frac{k \log \frac{1}{\delta_1}}{D_2(\zeta \| 1-\rho)} \bigg\}\bigg) (1+o(1)). \label{eq:t_appr3}
    \end{equation}
\end{thm}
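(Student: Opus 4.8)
The plan is to re-use the skeleton of the proof of Theorem \ref{thm:main_a2} almost verbatim for everything concerning the outer loop and the inner partitioning, and to graft on top of it the \emph{approximate}-certification argument supplied by Lemma \ref{lem:repetition2}. First I would re-establish the same high-probability structural events as in Approach 2: that every invocation of the estimation subroutine keeps $\kbar$ valid (contributing only to the $O(n^{-c})$ term), that a constant fraction of the $\kbar$ partitions are non-empty, that at least half of the non-empty partitions reach Step 3(b), and that only $o(\kbar)$ \emph{non-defective} partitions reach Step 3(b) per invocation. Summing over the $O(\log k)$ outer iterations using $\sum_i \kbar_i = O(k)$, this shows that the accumulated list $\Lc$ contains $k_1 = k(1-o(1))$ genuine defectives (all but the $O(\log n)$ resolved by the Approach 1 reversion in Step II, negligible since $k = \omega(\log n)$) together with only $k_0 = o(k)$ non-defectives. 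In particular $k_0 = o(k_1)$, which is exactly the hypothesis needed to invoke Lemma \ref{lem:repetition2}, and the counts recorded in Step 3(b) are precisely the ``number of positive tests'' that lemma refers to.

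Second, I would argue correctness of Steps IV and V. I apply Lemma \ref{lem:repetition2} to the list's $k_0$ non-defectives and $k_1$ defectives, with $\ttil = t_{\rm indiv}$ chosen as in \eqref{eq:t_indiv2} and $\epsilon_1$ a small constant tied to $\epsilon$; this certifies that, with probability $1-\delta_0-\delta_1$, the $(1-\epsilon_1)k_1$ highest-count items are all defective. The crucial bookkeeping step is to check that the threshold $(1-2\epsilon)\kbar_{\rm init}$ used in Step IV does not exceed this certified count: from $(1-\epsilon)\kbar_{\rm init} \le k \le \kbar_{\rm init}$ and $k_1 = k(1-o(1))$ one gets $(1-2\epsilon)\kbar_{\rm init} \le \frac{1-2\epsilon}{1-\epsilon}k \le (1-\epsilon_1)k_1$ for large $n$ whenever $\epsilon_1$ is taken slightly below $\frac{\epsilon}{1-\epsilon}$, so every item declared in Step IV is genuinely defective. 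The remaining $|\Lc|-(1-2\epsilon)\kbar_{\rm init} = O(\epsilon k)$ items are resolved by the individual testing of Step V at confidence $\frac{\delta}{k}$; a union bound over these (at most $k$) items, together with Lemma \ref{lem:repetition1}, classifies each correctly with total failure probability $O(\delta)$. Collecting $\delta_0+\delta_1 = O(\delta)$ from Lemma \ref{lem:repetition2}, the $O(\delta)$ from Steps V and II, and the $O(n^{-c})$ from the structural events gives the stated success probability $1-O(\delta)-O(n^{-c})$.

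Third, for the number of tests I condition on the structural events and split the cost exactly as in Approach 2 except for the certification. MNBS in Step 3(a) gives $\frac{k\log\frac{n}{k}}{I(\rho)} + O(k\log\log n)$; the partition repetition tests give $O(k\log\log n)$; and estimation and the Approach 1 reversion give lower-order $O(\log^2 k\cdot\log n)$ and $O((\log n)^2 + \log n\log\frac{1}{\delta})$ terms. The new second main term is the $(k_1+k_0)=k(1+o(1))$ individual-testing blocks of length $t_{\rm indiv}$ in Step 3(b); since $\log\frac{k_0}{\delta_0}\le\log\frac{k}{\delta_0}$ and the $\log\frac{1}{\epsilon_1}$ inside $t_{\rm indiv}$ is $O(1)$, this equals $k\max\{\frac{\log(k/\delta_0)}{D_2(\zeta\|\rho)},\frac{\log(1/\delta_1)}{D_2(\zeta\|1-\rho)}\}(1+o(1))$. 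The Step V correction costs $O(\epsilon k)\cdot\frac{\log(k/\delta)}{D(\frac{1}{2}\|\rho)}$, which is an $O(\epsilon)$ fraction of this second term (the ratio $\frac{D_2(\zeta\|\rho)}{D(\frac{1}{2}\|\rho)}$ being bounded over $\zeta\in(\rho,1-\rho)$), hence absorbed into $1+o(1)$ once $\epsilon$ is taken small — and, if needed, tending to zero slowly, with the estimation constant of Lemma \ref{lem:est_k} adjusted so its cost stays lower order. Since $\zeta$ enters only through the choice of $t_{\rm indiv}$, every bound above holds for an arbitrary fixed $\zeta\in(\rho,1-\rho)$, so optimizing over $\zeta$ yields the infimum in \eqref{eq:t_appr3}. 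The passage from the conditional to the unconditional average is identical to Approach 2: subsequent invocations of each step use fresh independent tests, so the number of repetitions until success is geometric with mean $1+o(1)$, and the only potentially expensive failure — a grossly wrong $\kbar$ — has probability $O(n^{-c})$ and contributes negligibly even in the worst case.

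The step I expect to be the main obstacle is the reconciliation in the second paragraph: the threshold in Step IV is pegged to the \emph{estimate} $\kbar_{\rm init}$, whereas Lemma \ref{lem:repetition2} certifies a fraction of the \emph{actual} number $k_1$ of defectives placed in $\Lc$, and these two quantities are linked only through the chain of $(1\pm\epsilon)$- and $(1\pm o(1))$-type inequalities above. Aligning the constants $\epsilon,\epsilon_1$ so that Step IV neither over-claims (declaring a non-defective) nor under-claims by more than $O(\epsilon k)$ items — which would inflate the cheap-but-$\zeta$-suboptimal Step V cost beyond an $o(1)$ fraction of the main term — is the delicate part, and is precisely what forces $\epsilon$ to be taken small.
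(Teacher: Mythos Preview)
Your proposal is correct and follows essentially the same approach as the paper: re-use the structural high-probability events from the proof of Theorem \ref{thm:main_a2}, invoke Lemma \ref{lem:repetition2} on the accumulated list $\Lc$ (with $k_1 = k(1-o(1))$ defectives and $k_0 = o(k)$ non-defectives) to certify Step IV, handle the $O(\epsilon k)$ leftover items in Step V via Lemma \ref{lem:repetition1} with confidence $\delta/k$, and absorb the Step V cost into the $1+o(1)$ factor by taking $\epsilon$ small. Your explicit verification that $(1-2\epsilon)\kbar_{\rm init} \le (1-\epsilon_1)k_1$ via $(1-2\epsilon)\kbar_{\rm init} \le \tfrac{1-2\epsilon}{1-\epsilon}k$ is somewhat more detailed than the paper's treatment of that step, and your identification of this constant-alignment as the delicate point is accurate.
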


While the minimization over $\zeta$ and the freedom in choosing $\delta_0$ and $\delta_1$ make this expression somewhat more complicated than our previous results, we can highlight two important special cases:
\begin{itemize}
    \item[(i)] Setting $\delta_0 = \delta_1 = \delta$ and $\zeta = \frac{1}{2}$, we recover the bound in Theorem \ref{thm:main_a2}. 
    \item[(ii)] If $\delta_0 = \delta_1 = \delta$ and $\delta \to 0$ with $\delta = k^{-o(1)}$ (e.g., $\delta = \frac{1}{{\rm poly}(\log k)}$), then the $\max\{\cdot,\cdot\}$ in \eqref{eq:t_appr3} is asymptotically attained by the first term for any fixed $\zeta \in (\rho,1-\rho)$.  Hence, we can take $\zeta$ to be arbitrary close to $\rho$, and the overall bound simplifies to 
        \begin{equation}
        \EE[T] \le \bigg( \frac{k \log \frac{n}{k}}{ I(\rho) } + \frac{k \log k}{D_2(1-\rho \| \rho)} \bigg) (1+o(1)). \label{eq:t_appr3a}
    \end{equation}
    Thus, we recover the existing bound \eqref{eq:t_existing} (note that $D_2(1-\rho \| \rho) = D_2(\rho \| 1-\rho)$).  A more detailed comparison is given in Section \ref{sec:comparison}.
\end{itemize}

\subsection{Proof of Theorem \ref{thm:main_a3}} \label{sec:a3_analysis}

{\bf Re-used findings.} Since the inner algorithm and Steps I-III of the outer algorithm coincide with those of Approach 2, we are able to re-use the following high-probability findings from the proof of Theorem \ref{thm:main_a2}:
\begin{itemize}
    \item The estimate $\kbar_i$ in each iteration $i$ satisfies the desired bound $(1-\epsilon)\kbar_i \le k_i \le \kbar_i$, where $k_i$ is the true number of remaining defectives.
    \item Step 3(b) is reached for  a constant fraction of defectives in each invocation of the inner algorithm, and hence $k_{i+1} \le (1-\Omega(1)) k_i$, and there are $O(\log k)$ total invocations.
    \item Step 3(b) is reached by all defectives (except those deferred to Approach 1 in Step II), but is only reached by $o(k)$ non-defectives.
    \item Assuming $t_{\rm indiv} = \Omega(\log k)$ (which will indeed be the case when we set its value), the overall number of tests resulting from Steps I--III (including those in the inner algorithm) is at most
    \begin{equation}
        \bigg( \frac{k\log\frac{n}{k}}{I(\rho)} + k t_{\rm indiv} \bigg)(1+o(1)), \label{eq:t_reused}
    \end{equation}
    and the overall contribution to the error probability from Steps I-III is $O(\delta) + O(n^{-c})$.
\end{itemize}

{\bf Analysis of correctness.} With the above findings in place, we seek to apply Lemma \ref{lem:repetition2} to deduce that Step IV succeeds, in the sense that the top $(1-2\epsilon) \kbar_{\rm init}$ ranked items in the list $\Lc$ are indeed defective.

By the assumption $k = \omega(\log n)$ and the third finding above, we have that $\Lc$ contains $k_1 = k(1-o(1))$ defectives, and $k_0 = o(k)$ non-defectives; in particular, compared to $k = \omega(\log n)$, the $C \log n$ (or fewer) items deferred to Approach 1 can be absorbed into the $1-o(1)$ term.  Hence, the condition $k_0 = o(k_1)$ in Lemma \ref{lem:repetition2} is satisfied.  In accordance with \eqref{eq:t_rep2}, and crudely upper bounding $k_0 \le k$, we fix $\epsilon_1 \in (0,1)$ and $\zeta \in (\rho,1-\rho)$ and set
\begin{equation}
    t_{\rm indiv} = \max\bigg\{ \frac{\log \frac{k}{\delta_0}}{ D_2(\zeta \| \rho) }, \frac{\log \frac{1}{\epsilon_1 \delta_1}}{ D_2(\zeta \| 1 - \rho) } \bigg\}. \label{eq:t_indiv1}
\end{equation}
Noting that $\epsilon_1$ is a fixed positive constant (but arbitrarily small), we claim that this simplifies to
\begin{equation}
    t_{\rm indiv} = \max\bigg\{ \frac{\log \frac{k}{\delta_0}}{ D_2(\zeta \| \rho) }, \frac{\log \frac{1}{\delta_1}}{ D_2(\zeta \| 1 - \rho) } \bigg\}(1+o(1)). \label{eq:t_indiv2}
\end{equation}
This is immediate when $\delta_1 = o(1)$, and otherwise, the assumption $k \to \infty$ implies that the maximum is achieved by the first term in both \eqref{eq:t_indiv1} and \eqref{eq:t_indiv2} anyway.

Recalling the assumption $\delta_1 + \delta_2 = O(\delta)$, Lemma \ref{lem:repetition2} implies that with probability $1 - O(\delta)$, the top $(1-\epsilon_1)k_1$ ranked items are all defective.  Having established that $k_1 = k(1-o(1))$ and $(1-\epsilon)k \le \kbar_{\rm init} \le k$, we observe that we can choose $\epsilon_1$ sufficiently small such that $(1-2\epsilon) \kbar_{\rm init} \le (1-\epsilon_1)k_1$.  Hence, the top $(1-2\epsilon) \kbar_{\rm init}$ ranked items are all defective, as desired for Step IV of the algorithm to succeed.

For Step V, the remaining number of defectives is $k - (1-2\epsilon) \kbar_{\rm init} \le C' \epsilon k$, where $C'$ is an absolute constant, and the remaining number of non-defectives is $o(k)$.  We apply the basic form of repetition testing in Lemma \ref{lem:repetition1} with confidence $\frac{\delta}{k}$.  By a union bound, every item is classified correctly with probability at least $1-\delta$ when we test each item $\frac{ \log\frac{k}{\delta} }{ D(\frac{1}{2} \| \rho) }$ times, leading to a total number of tests for Step V given by
\begin{equation}
    C' \epsilon \cdot \frac{ k \log\frac{k}{\delta} }{ D(\frac{1}{2} \| \rho) }. \label{eq:t_indiv3}
\end{equation}

{\bf Number of tests.} We combine \eqref{eq:t_reused} with \eqref{eq:t_indiv2} and \eqref{eq:t_indiv3}.  Since $\epsilon$ can be arbitrarily small, the term \eqref{eq:t_indiv3} is arbitrarily small compared to \eqref{eq:t_indiv2}, and can be absorbed into the multiplicative $1+o(1)$ term.   Choosing $\zeta \in (\rho,1-\rho)$ to minimize the total, we obtain the desired bound \eqref{eq:t_appr3}.  We have derived this bound conditioned on high-probability events, but the unconditional average again follows via an analogous argument to that of Approach 2.

%
%
\section{Comparisons to Existing Approaches} \label{sec:comparison}

In this section, we first compare our approach to that of \cite{Sca19} (in which  \eqref{eq:t_existing} was attained), and then provide an experimental example indicating that NBS-based group can be effective in practice.

\subsection{Comparison to Coding-Based Approaches}

{\bf Comparison to \cite{Sca19}.} As we already noted, if our goal is simply to attain $\pe \to 0$ as $n \to \infty$ (with $k \to \infty$ and $k = o(n)$), then the number of tests required by Approach 3 in this paper matches that of \cite{Sca19}.  An immediate advantage of our approach is that we also explicitly characterize how the number of tests depends on the target error probability $\delta$, whereas adapting the analysis of \cite{Sca19} to characterize this dependence appears to be difficult.

In addition, an inspection of the approach and analysis in \cite{Sca19} reveals the following limitations:
\begin{itemize}
    \item[(i)] As a subroutine, the algorithm in \cite{Sca19} uses a capacity-achieving channel code with block length $O(\log n)$.  In view of the fundamental limits imposed at finite block lengths \cite{Pol10}, this short block length is likely to require a significant backoff from capacity in practical problem sizes.  For instance, even $n = 10^{10}$ gives $\log n \approx 23$, which is an extremely short block length for a communication code.\footnote{Without going into detail, we note that the implied constants in the notation $O(\log n)$ are also not large enough to overcome this limitation when aiming for an asymptotically near-optimal number of tests.}
    \item[(ii)] The analysis of the error probability in \cite{Sca19} contains terms of the form $\frac{k^{-\epsilon}}{\alpha}$, where both $\epsilon$ and $\alpha$ are taken to zero at the end of the proof.  This indicates that $k$ may need to be extremely large (and $n$ is in turn even larger) to attain a small error probability.
\end{itemize}
Regarding item (i), our use of the NBS subroutine overcomes this limitation and exploits the benefits of full adaptivity and variable-length stopping, analogously to how variable-length feedback communications codes require significantly less finite-length backoff from capacity than fixed-length codes without feedback \cite{Pol11}.  The caveat is that requiring full adaptivity serves as a notable disadvantage compared to the approach of \cite{Sca19}, which uses only 4 stages of adaptivity.

In fairness, our analysis may have some similar limitations to item (ii) above (e.g., absorbing $O(\log \log n)$ factors into lower-order terms), but nevertheless, we believe it to be a significantly more practical approach overall.  In particular, it does not suffer from any notion of ``short effective code length'' or similar, and it permits the use of NBS algorithms in a black-box manner.  In the following subsection, we corroborate this discussion with an experimental example.

Another minor advantage is that our algorithms (Approaches 2 and 3) do not use any prior knowledge regarding the number of defectives $k$, though it should be straightforward to overcome this limitation by first estimating $k$ and then applying the approach in \cite{Sca19}.

{\bf Comparison to GROTESQUE.} The theoretical guarantees for GROTESQUE \cite{Cai13} are less comparable to ours, since they only seek scaling laws and not precise constants.  Nevertheless, it is interesting to note the algorithmic similarities between our approach (particularly Approach 1 in Section \ref{sec:a1_algo}) and that of GROTESQUE (adaptive version).  Both share a similar high-level structure, with the following main differences:
\begin{itemize}
    \item[(i)] We use repetition testing to distinguish between the cases of $\{0,\ge1\}$ defectives in each partition, whereas GROTESQUE uses i.i.d.~Bernoulli$\big(\frac{1}{2}\big)$ testing to distinguish between the cases of $\{0,1,\ge 2\}$ defectives, and only proceeds with 1-sparse recovery if the answer is ``1''.
    \item[(ii)] We use noisy binary search to identify the left-most defective in the ``$\ge 1$'' case, whereas GROTESQUE uses an expander code of length $O(\log n)$ to identify the unique defective in the ``$1$'' case.
    \item[(iii)] Similarly to our Approach 2, GROTESQUE continually re-shuffles the partitions in order to ensure that each item ends up in the ``1'' case after enough outer iterations.
\end{itemize}
Regarding item (i), our sub-problem appears to be easier to solve, since distinguishing the cases ``0'' vs.~``$\ge 1$'' can be done with a simple majority vote, whereas distinguishing ``1'' vs.~``0 or $\ge 2$'' requires more carefully choosing two appropriate thresholds (deciding ``1'' if the fraction of positive tests is in between them).
More importantly, item (ii) implies that the same practical limitation as that of \cite{Sca19} discussed above also applies to GROTESQUE.  While our approach has the advantage of avoiding this limitation, this again comes with the caveat that one should also consider the number of stages of adaptivity --  our approach is fully adaptive, but GROTESQUE only requires $O(\log n)$ stages.

\subsection{Experimental Example}

Here we present a simple proof-of-concept experiment to show that our general approach can be effective; note that we do not seek to be comprehensive in terms of diverse experimental settings or detailed comparisons. 
We follow the experimental setup of \cite[Sec.~3.7]{Ald19}, and compare against non-adaptive group testing with the two best-performing decoding algorithms therein: belief propagation (BP) \cite{Sed10} and linear programming (LP) \cite{Mal12}.  For the non-adaptive test matrix, we consider both an i.i.d.~design and a constant-column weight design, with the parameters chosen such that each test has a $50\%$ chance of being positive, as suggested by numerous theoretical studies \cite{Mez08,Sca15b,Joh16,Coj19,Coj19a}.  Due to the practical challenges discussed above, we do not attempt to compare to the adaptive algorithms in \cite{Sca19,Cai13}. 

For a practical implementation, we aim to keep our adaptive algorithm as simple as possible, and accordingly use Approach 1 (Section \ref{sec:a1_algo}) with the following modifications:
\begin{itemize}
    \item In Step 2 (repetition testing), we simply set the number of repetitions to a fixed odd number $r$, separately considering the values $r \in \{3,5,7,9,11,13\}$.  To slightly reduce the number of tests, we stop early when the outcome is already determined (e.g., if $r = 5$ and the first three outcomes are the same, then we can skip the final two).
    \item In Step 3 (applying NBS), we apply Karp and Kleinberg's multiplicative weights algorithm \cite{Kar07},\footnote{We used the implementation available at \url{https://github.com/adamcrume/robust-binary-search}.} which we found to be the most practical despite its analysis providing highly suboptimal constant factors.  This algorithm has a single parameter $\delta_{\rm NBS}$, which we set to $\delta_{\rm NBS} = \frac{\delta}{3k}$ with various $\delta \in [0.002,5]$.  We do not make use of the modification in which $\phi$ may be returned, as we found this to make little difference here.
\end{itemize}
Approaches 2 and 3 appear to be more difficult to implement, and since our experiments are only meant to serve as a simple proof of concept, we do not attempt to do so; see Remark \ref{rem:approaches} below for further discussion.

Figure \ref{fig:experiment} plots the experimental results for $n=500$ and $k=10$, with a noise level of $\rho = 0.05$.  The $x$-values and $y$-values for our adaptive algorithm are averaged over $10^5$ trials.  For the non-adaptive algorithms, the $x$-values are fixed, and the $y$-values are averaged over $10^3$ trials (due to their more expensive computation).  For approximate recovery, by ``fraction of mistakes'', we refer to the quantity
\begin{equation}
    \frac{1}{k} \max\big\{ |S \setminus \Shat|, |\Shat \setminus S| \},
\end{equation}
which is zero when $\Shat = S$ and one when none of the defectives are found.

Under both recovery criteria, we see that adaptivity is able to provide significant gains over the state-of-the-art non-adaptive algorithms (at least for suitably-chosen $\delta$ and $r$), despite having only considered the simplest version of our algorithm with no fine-tuning.  We note that the distinct ``curves'' for the adaptive algorithm correspond to the six different values of $r \in \{3,5,7,9,11,13\}$ mentioned above.  We also found that the fraction of mistakes for the adaptive algorithm is often close to $\frac{1}{k}$ times the error probability, indicating that when an error is made, it may only be due to having a single false positive or a single false negative.

\begin{figure}
    \begin{centering}
        \includegraphics[width=0.42\columnwidth]{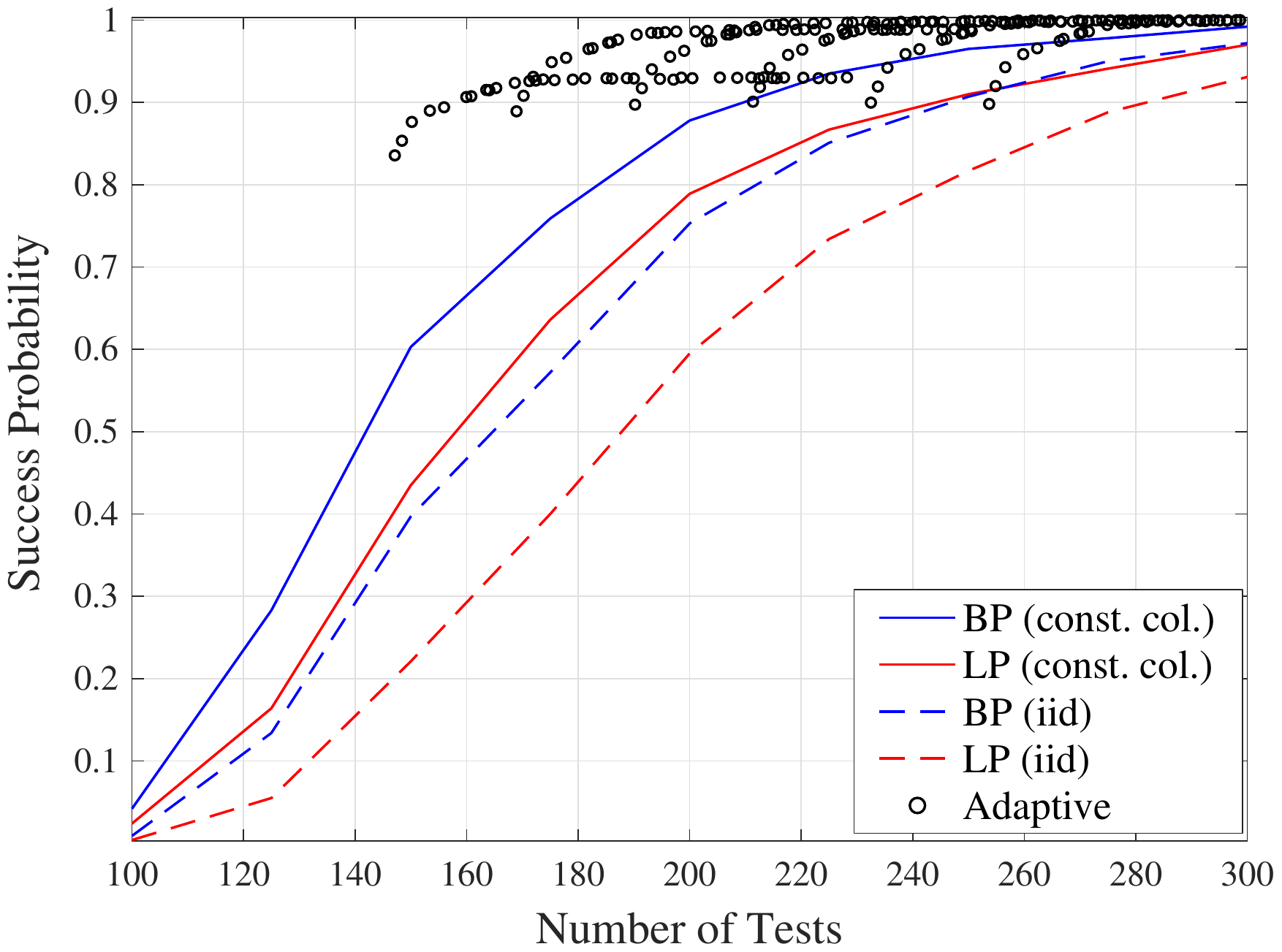} \quad
        \includegraphics[width=0.44\columnwidth]{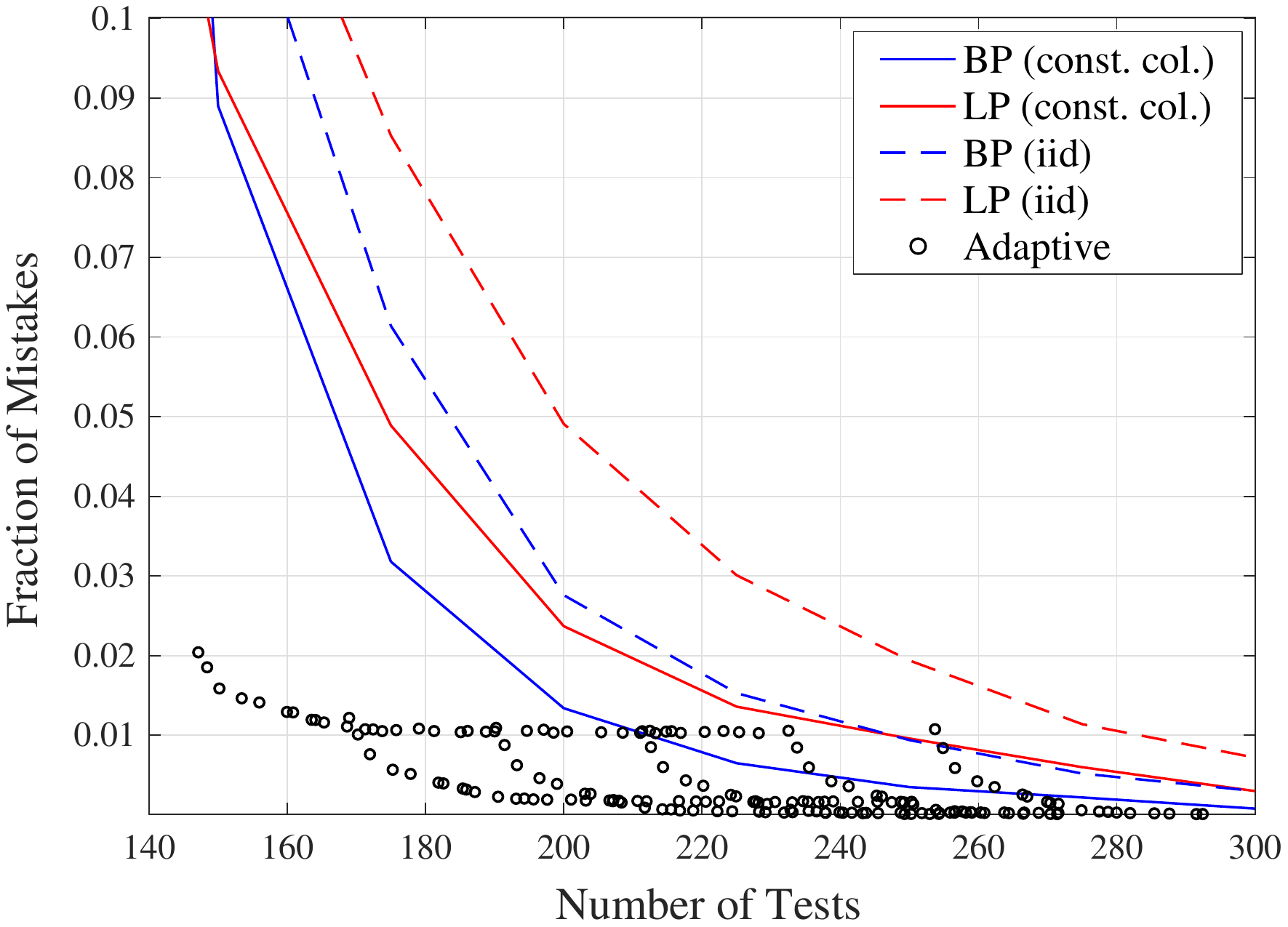}
        \par
    \end{centering}
    
    \caption{Numerical experiments for exact recovery (left) and approximate recovery (right). \label{fig:experiment}}
\end{figure}

A slight caveat to these performance gains is that the gap may be reduced if {\em hard limits} are placed on the number of tests used in each trial.  In such cases, it is crucial to understand not only the average number of tests used by the adaptive algorithm, but also the variance.  To visualize this, we give an example histogram of the number of tests used by one configuration of $(r,\delta)$ among $10^6$ trials (namely, $r=5$ and $\delta=0.2$).  We observe that the standard deviation is small compared to the average number of tests, though non-negligible.

\begin{rem} \label{rem:approaches}
    A notable difficulty in implementing Approaches 2 and 3 is that the theoretical choices of the extra parameters may be less suitable at practical problem sizes (e.g., $\delta_{\rm est}$, $C$, the $\frac{1}{\log n}$ confidence term, and $\epsilon$ in Approach 3), amounting to having significantly more parameters to tune instead of just $r$ and $\delta$.  The main reason for requiring these approaches in our theoretical analysis is to overcome the unspecified constant in the $O\big( \log\frac{1}{\delta} \big)$ dependence in \cite{Ben08}.  However, the algorithm in \cite{Kar07} appears to behave very favorably with respect to decreasing $\delta$ in practice (significantly better than the associated theoretical bounds).  Thus, it may be that the requirement of two confidence levels (low-confidence NBS and high-confidence certification) is purely theoretical and not beneficial for practical problem sizes; answering this definitively is beyond the scope of our work. 
\end{rem}

\begin{figure}
    \begin{centering}
        \includegraphics[width=0.44\columnwidth]{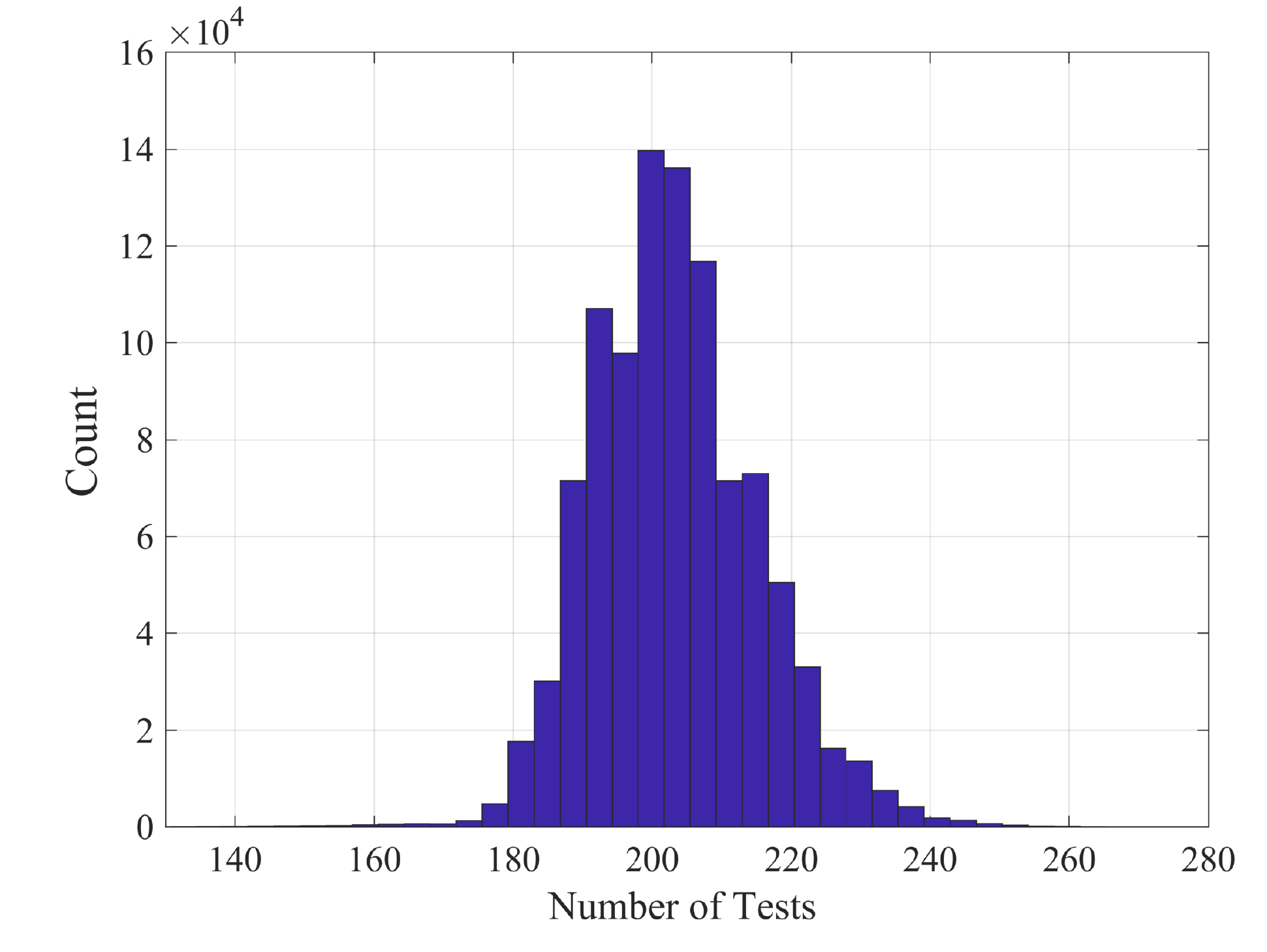}
        \par
    \end{centering}
    
    \caption{Example histogram of the number of tests used by the adaptive algorithm. \label{fig:histogram}}
\end{figure}

%
%
\section{Conclusion}

We have introduced and analyzed noisy group testing algorithms that rely on noisy binary search as a subroutine, serving as natural noisy counterparts to the famous splitting algorithm of Hwang \cite{Hwa72}.  The most sophisticated variant of our algorithm attains the best known theoretical guarantee attained by any practical algorithm, while also overcoming a key practical limitation and explicitly specifying dependence on the target error probability.  We have observed that even the simplest variant of our algorithm can perform very well in practice, and in future work, it may be of interest to refine this to attain further improved practical variants.

%
%
\bigskip
\appendix

\subsection{Proof of Lemma \ref{lem:mnbs} (Modified Noisy Binary Search)} \label{app:mnbs}

We describe the two ways in which the algorithm of \cite{Ben08} is modified as follows.

{\bf Implementing NBS queries via group testing queries.} Noisy binary search (NBS) works with an array of length $n$, and seeks to identify an unknown index $i^* \in \{1,\dotsc,n\}$ via noisy answers to questions of the form ``Is $i^* \le i$?''  In contrast, in group testing, we have a binary vector in $\{0,1\}^n$ (where $1$ indicates defectivity), and we can observe noisy ``OR'' queries on any chosen subset.

To apply NBS using group testing queries, we need to account for the fact that there may be {\em multiple} defectives (or no defectives, but this is handled separately below).  To do so, we specifically seek to {\em identify the left-most defective in the list}, thus making the notion of a specific item $i^*$ well-defined.  Then, if we want to ask the query ``Is $i^* \le i$?'', we simply test all of the items in $\{1,\dotsc,i\}$.  Thus, we can search for $i^*$ using NBS in a black-box manner, even if $i^*$ is not the only defective item.

{\bf Allowing the possibility of no defectives.} In some cases, we will apply the NBS subroutine on arrays where every item is non-defective, and we would like NBS to be able to detect this scenario.  To achieve this, we add a dummy defective to the end of the array, i.e., if the original length was $n$, then we artificially add another defective indexed by $n+1$, and manually set the test outcome to be ${\rm Bernoulli}(1-\rho)$ whenever this item is included.  Then, if NBS returns $n+1$ as its estimate of $i^*$, we take this to indicate that no defectives are present among indices $\{1,\dotsc,n\}$, and return $\phi$.  The bound on the number of tests from Lemma \ref{lem:nbs} is unchanged, because the change from $n$ to $n+1$ only affects the higher-order terms.

\subsection{Proof of Lemma \ref{lem:est_k} (Estimating the Number of Defectives)} \label{app:est_k}

We first analyze a useful subroutine, and then present the full algorithm.

{\bf A useful subroutine.} Let $k_0 \ge 2$ be a putative number of defective items, and consider the goal of declaring whether $k \ge k_0$ or $k \le \frac{k_0}{\sqrt 2}$.  For any $k \in \big( \frac{k_0}{\sqrt 2}, k_0 \big)$, either declaration is considered adequate.  We make use of i.i.d.~Bernoulli testing, in which each item is independently placed into each test with probability $\frac{\nu}{k_0}$ for some $\nu > 0$.

We set $\nu = \nu_0 $, where $\nu_0$ is defined to be the parameter that would lead to i.i.d.~Bernoulli testing having positive and negative tests outcomes being equally likely if exactly $k_0$ defectives were present, i.e., $\big(1 - \frac{\nu_0}{k_0}\big)^{k_0} = \frac{1}{2}$ (e.g., see \cite{Tru20}).  It follows immediately that when $k \ge k_0$, the probability of a positive test is at least $\frac{1}{2}$.

On the other hand, if $k \le \frac{k_0}{\sqrt 2}$, then in the {\em noiseless} group testing model (whose output we denote by $U$ to distinguish it from $Y$), the probability of a positive test satisfies 
\begin{align}
    \PP[U = 1] 
    &\le 1 - \bigg(1 - \frac{\nu_0}{k_0}\bigg)^{k_0/{\sqrt 2}} < \frac{1}{2},
\end{align}
where the strict inequality holds since replacing $k_0/{\sqrt 2}$ by $k_0$ would give exactly $\frac{1}{2}$ due to our choice of $\nu$.  The precise gap to $\frac{1}{2}$ is not important for our purposes. 

Next, we observe that a strict gap to $\frac{1}{2}$ in the noiseless case implies the same in the noisy case, since if $U$ is flipped with probability $\rho$ to produce $Y$, then 
\begin{align}
    \PP[Y = 1] 
    &= \PP[U = 1](1-\rho) +  (1-\PP[U = 1]) \rho \\
    &= \PP[U = 1]\big(1 - 2\rho) + \rho.
\end{align}
The right-hand side is strictly increasing with respect to $ \PP[U = 1]$, and equals $\frac{1}{2}$ when $\PP[U = 1] = \frac{1}{2}$, which establishes the desired claim.

Thus, distinguishing between $k \ge k_0$ and $k \le \frac{k_0}{\sqrt 2}$ simply amounts to distinguishing independent Bernoulli random variables with parameters known to be at least $\frac{1}{2}$ and strictly less than $\frac{1}{2}$, respectively.  By standard Chernoff bounds, this can be achieved with success probability decaying exponentially in the number of tests.  In particular, we can attain success probability $1 - O(n^{-c_0})$ for any constant $c_0 > 0$, using $O(\log n)$ tests with an implied constant depending on $c_0$.

{\bf Algorithm for estimating $k$.} We turn the above procedure into an algorithm that estimates the total number of defectives to within a factor of $\frac{1}{2}$.  To do this, we simply run the above procedure with $k_0 = 2$, then again with $k_0 = 2\sqrt{2}$ if the first step declares $k \ge 2$, and so on until the procedure declares $k \le \frac{k_0}{\sqrt 2}$, at which point we return $\kbar = k_0$.  As long as no errors are made, we have the following:
\begin{itemize}
    \item It holds that $\frac{\kbar}{2} \le k \le \kbar$ (otherwise, we would contradict one of the final two decisions made);
    \item The procedure is called $O(\log k)$ times.
\end{itemize}
By considering a geometric random variable counting the number of possible failures after $k_0$ indeed exceeds $2k$, the average number of calls is also $O(\log k)$, for a total of $O(\log k \cdot \log n)$ tests on average.

In addition, success is guaranteed as long as the first $O(\log k)$ calls to the procedure succeed, so by the union bound and a suitable choice of $c_0$ above, we are guaranteed to succeed with probability $1 - O(n^{-c})$ for any fixed $c > 0$.  This establishes the first claim in Lemma \ref{lem:est_k}.

The second part of the lemma with a general value of $\epsilon \in (0,1)$ follows similarly; the above analysis corresponds to $\epsilon = \frac{1}{2}$, but there is no significant change for any other fixed value in $(0,1)$.

\subsection{Proofs of Lemmas \ref{lem:repetition1} and \ref{lem:repetition2} (Repetition Testing)} \label{app:repetition}

\begin{proof}[Proof of Lemma \ref{lem:repetition1}]
    Suppose that we use a number of repeated observations given by some generic value $\ttil$, and that we estimate that $v$ is $1$ if more than half of the tests returned $1$, and $0$ otherwise.  Then by the Chernoff bound for binomial random variables \cite[Sec.~2.2]{Bou13}, the error probability is at most $e^{-\ttil D_2(\frac{1}{2}||\rho)}$.  We set $\ttil = \frac{\log(1/\delta)}{D_2(\frac{1}{2}||\rho)}$, and observe that the error probability is upper bounded by $\exp(-\frac{\log(1/\delta)}{D_2(\frac{1}{2}||\rho)} \cdot D_2(\frac{1}{2}||\rho)) = \delta$, thus proving Lemma \ref{lem:repetition1}.
\end{proof}

\begin{proof}[Proof of Lemma \ref{lem:repetition2}]
    Similarly to the above, suppose that we individually test some item $\ttil$ times. As is well-known and was also used in \cite{Sca18}, the Chernoff bound implies for any $\zeta \in (\rho,1-\rho)$ that:
    \begin{itemize}
        \item If $j$ is defective, the probability of $\zeta \ttil$ or fewer outcomes is at most $e^{-\ttil D_2(\zeta \| 1-\rho)}$.
        \item If $j$ is non-defective, the probability of $\zeta \ttil$ or more outcomes is at most $e^{-\ttil D_2(\zeta \| \rho)}$.
    \end{itemize}
    Note that setting $\zeta = \frac{1}{2}$ recovers the bound used in the proof of Lemma \ref{lem:repetition1} above.
    
    Now suppose that this repetition testing is applied separately to each of the $k_1$ defectives and $k_0$ non-defectives.  Then, we have the following:
    \begin{itemize}
        \item {\em (Non-defectives)} By the union bound, the probability of any non-defective giving $\zeta \ttil$ or more positive outcomes is at most $k_0 e^{-\ttil D_2(\zeta \| \rho)}$.  This is at most $\delta_0$ as long as 
        \begin{equation}
            \ttil \ge \frac{\log \frac{k_0}{\delta_0}}{ D_2(\zeta \| \rho) }. \label{eq:ntil1}
        \end{equation}
        \item {\em (Defectives)} The average number of defectives returning $\zeta \ttil$ or fewer positive outcomes is upper bounded by $k_1 e^{-\ttil D_2(\zeta \| 1-\rho)}$.  Hence, by Markov's inequality, the probability of this occurring for more than $\epsilon_1 k_1$ defectives is at most $\frac{1}{\epsilon_1} e^{-\ttil D_2(\zeta \| 1-\rho)}$.  This is at most $\delta_1$ as long as
        \begin{equation}
            \ttil \ge \frac{\log \frac{1}{\epsilon_1 \delta_1}}{ D_2(\zeta \| 1 - \rho) }.  \label{eq:ntil2}
        \end{equation}
    \end{itemize}
    Recalling that $k_0 = o(k_1)$ and $\epsilon_1$ is constant, the above high-probability events ensure that the top $(1-\epsilon_1)k_1$ ranked items are all defective.   Taking $\ttil$ to equal the more stringent of the two values in \eqref{eq:ntil1} and \eqref{eq:ntil2}, this establishes Lemma \ref{lem:repetition2}.
\end{proof}

\subsection{A Multiplicative Form of the Chernoff Bound} \label{app:mult_chernoff}
 
 In generic notation, consider a binomial random variable $Z$ with $K$ trials and success probability $q = \frac{1}{\log^c N}$, for some $c > 0$.  Hence, $\EE[Z] = \frac{K}{\log^c N}$.
 
 A standard multiplicative form of the Chernoff bound states that, for any $\alpha > 0$, we have \cite[Ch.~2]{Bou13}
 \begin{align}
     \PP\big[ Z  \ge (1+\alpha)\EE[Z] \big] \le \exp\big( - \EE[Z] \big( (1+\alpha)\log(1+\alpha) - \alpha \big) \big).  \label{eq:strong_chernoff_1}
 \end{align}
 Fix any function $f(N)$ satisfying $f(N) \le o(\log^c N)$, and consider the probability $\PP\big[ Z  \ge \frac{K}{f(N)} \big]$.  Since $f(N) = o(\log^c N)$ and $\EE[Z] = \frac{K}{\log^c N}$, this amounts to choosing $\alpha = \omega(1)$ in \eqref{eq:strong_chernoff_1}; specifically, $\alpha = \frac{\log^c N}{f(N)} - 1 = \frac{\log^c N}{f(N)}(1+o(1)) = \omega(1)$.   In addition, we have the simplification $(1+\alpha)\log(1+\alpha) - \alpha = (\alpha \log \alpha)(1+o(1))$.  Hence, \eqref{eq:strong_chernoff_1} yields
 \begin{align}
     \PP\bigg[ Z  \ge \frac{K}{f(N)} \bigg] 
     &\le \exp\bigg( - \frac{K}{\log^c N} \cdot \Theta \bigg( \frac{\log^c N}{f(N)} \log \frac{\log^c N}{f(N)}  \bigg) \bigg)   \\
     &= \exp\bigg( - \frac{K}{f(N)} \cdot \Theta \bigg( \log \frac{\log^c N}{f(N)}  \bigg) \bigg).
 \end{align}
 We use this result in the following two special cases:
 \begin{itemize}
     \item If $f(N) = C$ for some constant $C > 0$, then we get $\PP[ Z \ge K/C ] \le e^{-\omega(K)}$.
     \item If $f(N) = \log\log N$, then we get $\PP[ Z \ge K / \log\log N ] \le e^{-\Omega(K)}$, since the $\Theta \big( \log \frac{\log N}{f(N)}  \big) = \Theta(\log \log N)$ term cancels with $f(N)$ (up to a constant factor).
 \end{itemize}

\bibliographystyle{IEEEtran}
\bibliography{JS_References}

\begin{thebibliography}{10}
\providecommand{\url}[1]{#1}
\csname url@samestyle\endcsname
\providecommand{\newblock}{\relax}
\providecommand{\bibinfo}[2]{#2}
\providecommand{\BIBentrySTDinterwordspacing}{\spaceskip=0pt\relax}
\providecommand{\BIBentryALTinterwordstretchfactor}{4}
\providecommand{\BIBentryALTinterwordspacing}{\spaceskip=\fontdimen2\font plus
\BIBentryALTinterwordstretchfactor\fontdimen3\font minus
  \fontdimen4\font\relax}
\providecommand{\BIBforeignlanguage}[2]{{%
\expandafter\ifx\csname l@#1\endcsname\relax
\typeout{** WARNING: IEEEtran.bst: No hyphenation pattern has been}%
\typeout{** loaded for the language `#1'. Using the pattern for}%
\typeout{** the default language instead.}%
\else
\language=\csname l@#1\endcsname
\fi
#2}}
\providecommand{\BIBdecl}{\relax}
\BIBdecl

\bibitem{Dor43}
R.~Dorfman, ``The detection of defective members of large populations,''
  \emph{Ann. Math. Stats.}, vol.~14, no.~4, pp. 436--440, 1943.

\bibitem{Ant11}
A.~Fern\'andez~Anta, M.~A. Mosteiro, and J.~Ram\'on Mu\~{n}oz, ``Unbounded
  contention resolution in multiple-access channels,'' in \emph{Distributed
  Computing}.\hskip 1em plus 0.5em minus 0.4em\relax Springer Berlin
  Heidelberg, 2011, vol. 6950, pp. 225--236.

\bibitem{Cli10}
R.~Clifford, K.~Efremenko, E.~Porat, and A.~Rothschild, ``Pattern matching with
  don't cares and few errors,'' \emph{J. Comp. Sys. Sci.}, vol.~76, no.~2, pp.
  115--124, 2010.

\bibitem{Cor05}
G.~Cormode and S.~Muthukrishnan, ``What's hot and what's not: Tracking most
  frequent items dynamically,'' \emph{ACM Trans. Database Sys.}, vol.~30,
  no.~1, pp. 249--278, March 2005.

\bibitem{Ald21}
M.~Aldridge and D.~Ellis, ``Pooled testing and its applications in the
  {COVID}-19 pandemic,'' in \emph{Pandemics: Insurance and Social
  Protection}.\hskip 1em plus 0.5em minus 0.4em\relax Springer, 2022, pp.
  217--249.

\bibitem{Gil08}
A.~Gilbert, M.~Iwen, and M.~Strauss, ``Group testing and sparse signal
  recovery,'' in \emph{Asilomar Conf. Sig., Sys. and Comp.}, Oct. 2008, pp.
  1059--1063.

\bibitem{Hwa72}
F.~Hwang, ``A method for detecting all defective members in a population by
  group testing,'' \emph{J. Amer. Stats. Assoc.}, vol.~67, no. 339, pp.
  605--608, 1972.

\bibitem{Tru20}
L.~V. Truong, M.~Aldridge, and J.~Scarlett, ``On the all-or-nothing behavior of
  {B}ernoulli group testing,'' \emph{IEEE J. Sel. Areas in Inf. Theory},
  vol.~1, no.~3, pp. 669--680, 2020.

\bibitem{Fei94}
U.~Feige, P.~Raghavan, D.~Peleg, and E.~Upfal, ``Computing with noisy
  information,'' vol.~23, no.~5, p. 1001–1018, 1994.

\bibitem{Kar07}
R.~M. Karp and R.~Kleinberg, ``Noisy binary search and its applications,'' in
  \emph{Proc. ACM-SIAM Symp. Disc. Alg. (SODA)}, USA, 2007.

\bibitem{Ben08}
M.~Ben-Or and A.~Hassidim, ``The {B}ayesian learner is optimal for noisy binary
  search (and pretty good for quantum as well),'' in \emph{IEEE Conf. Found.
  Comp. Sci. (FOCS)}, 2008, pp. 221--230.

\bibitem{Now09}
R.~Nowak, ``Noisy generalized binary search,'' in \emph{Conf. Neur. Inf. Proc.
  Sys. (NeurIPS)}, 2009.

\bibitem{Ald19a}
M.~Aldridge, ``Rates of adaptive group testing in the linear regime,'' in
  \emph{IEEE Int. Symp. Inf. Theory (ISIT)}, 2019.

\bibitem{Cha11}
C.~L. Chan, P.~H. Che, S.~Jaggi, and V.~Saligrama, ``Non-adaptive probabilistic
  group testing with noisy measurements: Near-optimal bounds with efficient
  algorithms,'' in \emph{Allerton Conf. Comm., Ctrl., Comp.}, Sep. 2011, pp.
  1832--1839.

\bibitem{Sca15b}
J.~Scarlett and V.~Cevher, ``Phase transitions in group testing,'' in
  \emph{Proc. ACM-SIAM Symp. Disc. Alg. (SODA)}, 2016.

\bibitem{Sca18b}
J.~Scarlett and O.~Johnson, ``Noisy non-adaptive group testing: A
  (near-)definite defectives approach,'' \emph{IEEE Trans. Inf. Theory},
  vol.~66, no.~6, pp. 3775--3797, 2020.

\bibitem{Du93}
D.~Du and F.~K. Hwang, \emph{Combinatorial group testing and its
  applications}.\hskip 1em plus 0.5em minus 0.4em\relax World Scientific, 2000,
  vol.~12.

\bibitem{Ald19}
M.~Aldridge, O.~Johnson, and J.~Scarlett, ``Group testing: An information
  theory perspective,'' \emph{Found. Trend. Comms. Inf. Theory}, vol.~15, no.
  3--4, pp. 196--392, 2019.

\bibitem{Dam12}
P.~Damaschke and A.~S. Muhammad, ``Randomized group testing both query-optimal
  and minimal adaptive,'' in \emph{Int. Conf. Curr. Trends in Theory and
  Practice of Comp. Sci.}\hskip 1em plus 0.5em minus 0.4em\relax Springer,
  2012, pp. 214--225.

\bibitem{Sca18}
J.~{Scarlett}, ``Noisy adaptive group testing: Bounds and algorithms,''
  \emph{IEEE Trans. Inf. Theory}, vol.~65, no.~6, pp. 3646--3661, June 2019.

\bibitem{Coj19a}
A.~Coja-Oghlan, O.~Gebhard, M.~Hahn-Klimroth, and P.~Loick, ``Optimal group
  testing,'' in \emph{Conf. Learn. Theory (COLT)}, 2020.

\bibitem{Bal13}
L.~Baldassini, O.~Johnson, and M.~Aldridge, ``The capacity of adaptive group
  testing,'' in \emph{IEEE Int. Symp. Inf. Theory}, July 2013, pp. 2676--2680.

\bibitem{Ver10}
S.~{Verd\'u} and S.~{Shamai}, ``Variable-rate channel capacity,'' \emph{IEEE
  Trans. Inf. Theory}, vol.~56, no.~6, pp. 2651--2667, 2010.

\bibitem{Sca19}
J.~{Scarlett}, ``An efficient algorithm for capacity-approaching noisy adaptive
  group testing,'' in \emph{IEEE Int. Symp. Inf. Theory (ISIT)}, 2019.

\bibitem{Cai13}
S.~Cai, M.~Jahangoshahi, M.~Bakshi, and S.~Jaggi, ``Efficient algorithms for
  noisy group testing,'' \emph{IEEE Trans. Inf. Theory}, vol.~63, no.~4, pp.
  2113--2136, 2017.

\bibitem{Cut20}
M.~Cuturi, O.~Teboul, Q.~Berthet, A.~Doucet, and J.-P. Vert, ``Noisy adaptive
  group testing using {B}ayesian sequential experimental design,'' 2020,
  https://arxiv.org/abs/2004.12508.

\bibitem{Abr20}
L.~Abraham, G.~B{\'e}cigneul, and B.~Sch{\"o}lkopf, ``Crackovid: Optimizing
  group testing,'' 2020, https://arxiv.org/abs/2005.06413.

\bibitem{Bit18}
D.~M. Bittner, A.~D. Sarwate, and R.~N. Wright, ``Using noisy binary search for
  differentially private anomaly detection,'' in \emph{Int. Symp. Cyber Sec.
  Cryp. and Mach. Learn.}, 2018.

\bibitem{Hor63}
M.~Horstein, ``Sequential transmission using noiseless feedback,'' \emph{IEEE
  Trans. Inf. Theory}, vol.~9, no.~3, pp. 136--143, 1963.

\bibitem{Bur74}
M.~V. Burnasev and K.~S. Zigangirov, ``On the problem of interval estimation
  under controlled observations,'' \emph{Problemy Peredachi Informatsii},
  vol.~10, no.~3, pp. 51--61, 1974.

\bibitem{Pri21}
E.~Price, J.~Scarlett, and N.~Tan, ``Fast splitting algorithms for
  sparsity-constrained and noisy group testing,'' 2021,
  https://arxiv.org/abs/2106.00308.

\bibitem{Pri20}
E.~Price and J.~Scarlett, ``A fast binary splitting approach to non-adaptive
  group testing,'' in \emph{Int. Conf. Rand. Comp. (RANDOM)}, 2020.

\bibitem{Che20}
M.~Cheraghchi and V.~Nakos, ``Combinatorial group testing and sparse recovery
  schemes with near-optimal decoding time,'' in \emph{Found. Comp. Sci.
  (FOCS)}, 2020.

\bibitem{Dam10}
P.~Damaschke and A.~S. Muhammad, ``Competitive group testing and learning
  hidden vertex covers with minimum adaptivity,'' \emph{Disc. Maths., Algs. and
  Apps.}, vol.~2, no.~3, pp. 291--311, 2010.

\bibitem{Fal16}
M.~Falahatgar, A.~Jafarpour, A.~Orlitsky, V.~Pichapati, and A.~T. Suresh,
  ``Estimating the number of defectives with group testing,'' in \emph{IEEE
  Int. Symp. Inf. Theory (ISIT)}, 2016.

\bibitem{Bsh18}
N.~H. Bshouty, V.~E. Bshouty-Hurani, G.~Haddad, T.~Hashem, F.~Khoury, and
  O.~Sharafy, ``Adaptive group testing algorithms to estimate the number of
  defectives,'' in \emph{Alg. Learn. Theory (ALT)}, 2018, pp. 93--110.

\bibitem{Pol10}
Y.~Polyanskiy, V.~Poor, and S.~Verd\'{u}, ``Channel coding rate in the finite
  blocklength regime,'' \emph{IEEE Trans. Inf. Theory}, vol.~56, no.~5, pp.
  2307--2359, May 2010.

\bibitem{Pol11}
Y.~Polyanskiy, H.~V. Poor, and S.~Verd{\'u}, ``Feedback in the non-asymptotic
  regime,'' \emph{IEEE Trans. Inf. Theory}, vol.~57, no.~8, pp. 4903--4925,
  2011.

\bibitem{Sed10}
D.~Sejdinovic and O.~Johnson, ``Note on noisy group testing: Asymptotic bounds
  and belief propagation reconstruction,'' in \emph{Allerton Conf. Comm.,
  Control and Comp.}, 2010.

\bibitem{Mal12}
D.~Malioutov and M.~Malyutov, ``Boolean compressed sensing: {LP} relaxation for
  group testing,'' in \emph{IEEE Int. Conf. Acoust. Sp. Sig. Proc. (ICASSP)},
  March 2012, pp. 3305--3308.

\bibitem{Mez08}
M.~M{\'e}zard, M.~Tarzia, and C.~Toninelli, ``Group testing with random pools:
  Phase transitions and optimal strategy,'' \emph{J. Stat. Phys.}, vol. 131,
  no.~5, pp. 783--801, 2008.

\bibitem{Joh16}
O.~Johnson, M.~Aldridge, and J.~Scarlett, ``Performance of group testing
  algorithms with near-constant tests-per-item,'' \emph{IEEE Trans. Inf.
  Theory}, vol.~65, no.~2, pp. 707--723, Feb. 2019.

\bibitem{Coj19}
A.~Coja-Oghlan, O.~Gebhard, M.~Hahn-Klimroth, and P.~Loick,
  ``Information-theoretic and algorithmic thresholds for group testing,'' in
  \emph{Int. Colloq. Aut., Lang. and Prog. (ICALP)}, 2019.

\bibitem{Bou13}
S.~Boucheron, G.~Lugosi, and P.~Massart, \emph{Concentration Inequalities: A
  Nonasymptotic Theory of Independence}.\hskip 1em plus 0.5em minus 0.4em\relax
  OUP Oxford, 2013.

\end{thebibliography}

\end{document}